\newcommand{\sign}{{\rm sign}}
\newcommand{\Dslash}{\mathcal{D}}
\newcommand{\range}{\mathop{\mathrm{range}}\nolimits}
\newcommand{\MSAP}[1]{M_\mathit{SAP}^{(#1)}}
\newcommand{\MSAPone}{M_\mathit{SAP}}
\newcommand{\ESAP}{E_\mathit{SAP}}
\newcommand{\ninv}{n_\mathit{inv}}
\newcounter{confcounter}
\newcommand{\conflabel}[1]{\refstepcounter{confcounter} \label{#1}}
\title{Adaptive Aggregation Based Domain Decomposition Multigrid for the Lattice Wilson Dirac Operator  \thanks{This work was partially funded by Deutsche Forschungsgemeinschaft (DFG) Transregional Collaborative Research Centre 55 (SFB/TRR55)}}
\author{A. Frommer\thanks{Department of Mathematics, Bergische Universit\"at Wuppertal, 42097 Germany, {\tt \{frommer,kkahl,leder,rottmann\}@math.uni-wuppertal.de}.}
\and K. Kahl\footnotemark[2]
\and S. Krieg\thanks{Department of Physics, Bergische Universit\"at Wuppertal, 42097 Germany and J\"ulich Supercomputing Centre, Forschungszentrum J\"ulich, 52428 J\"ulich, Germany, {\tt s.krieg@fz-juelich.de}.}
\and B. Leder\footnotemark[2]
\and M. Rottmann\footnotemark[2]
}
\begin{document}
\maketitle
\begin{abstract}
In lattice QCD computations a substantial amount of work is spent in solving discretized versions of the Dirac equation. Conventional Krylov solvers show critical slowing down for large system sizes and physically interesting parameter regions. We present a domain decomposition adaptive algebraic multigrid method used as a preconditioner to solve the ``clover improved'' Wilson discretization of the Dirac equation. This approach combines and improves two approaches, namely domain decomposition and adaptive algebraic multigrid, that have been used separately in lattice QCD before. We show in extensive numerical tests conducted with a parallel production code implementation that considerable speed-up over conventional Krylov subspace methods, domain decomposition methods and other hierarchical approaches for realistic system sizes can be achieved.
\end{abstract}

\begin{keywords} 
multilevel, multigrid, lattice QCD, Wilson Dirac operator, domain decomposition, aggregation, adaptivity, parallel computing.
\end{keywords}

\begin{AMS}
65F08, %numerical linear algebra, preconditioners for iterative methods
65F10, %numerical linear algebra, iterative methods for linear systems
65Z05, %numerical analysis, application to physics
65Y05  %numerical analysis, parallel computation
\end{AMS}

% -----------------------------------------------------------------------------
\section{Introduction}
% -----------------------------------------------------------------------------
Lattice QCD simulations are among the world's most demanding computational problems, and a significant part of today's supercomputer resources is spent in these simulations \cite{PRACE:ScAnnRep12,PRACE:ScC12}. Our concern in this paper is three-fold: We want to make the mathematical modeling related with QCD and lattice QCD more popular in the scientific computing community and therefore spend some effort on explaining fundamentals. On top of that we develop a new and efficient adaptive algebraic multigrid method to solve systems with the discretized Dirac operator, and we show results for a large number of numerical experiments based on an advanced, production code quality implementation with up-to-date physical data.

The computational challenge in lattice QCD computations consists of repeatedly solving very large sparse linear systems
\begin{equation}\label{eq:discreteDirac}
  Dz = b,
\end{equation}
where $D = D(U,m)$ is a discretization, typically the Wilson discretization,  of the Dirac operator on a four-dimensional space-time lattice. The Wilson Dirac operator depends on a gauge field $U$ and a mass constant $m$. %
%====> cut
%In recent computations 
%=====<
Recently, lattices with up to $144\times64^3$ lattice points have been used, involving the solution of linear systems with $452,\!984,\!832$ unknowns \cite{Alexandrou:2011db,Bae:2011ff,Bali:2012qs,Durr:2010aw,Fritzsch:2012wq}. Usually these linear systems are solved by standard Krylov subspace methods. Their iteration count increases tremendously when approaching the physically relevant parameter values (i.e., physical mass constants and lattice spacing $a\rightarrow 0$), a phenomenon referred to as ``critical slowing down'' in the physics literature. Thus it is of utmost importance to develop preconditioners 
%====> cut
% for said methods 
%====<
which overcome these scaling issues. The most common 
%====> cut 
%ly used 
%====<
preconditioners 
%====> cut 
%in lattice QCD computations
%====< 
nowadays are odd-even preconditioning~\cite{Degrand1990211,Lippert19991357}, deflation%
%====> cut
% techniques
%====<
~\cite{Luescher2007}, and domain decomposition%
%====> cut 
% approaches
%====<
~\cite{Nobile2012,Luescher2003}. While these approaches yield significant speed-ups over the unpreconditioned versions, their scaling behavior is unchanged and critical slowing down still occurs. 

Multigrid methods have been considered in the lattice QCD community as well, motivated by their potential (e.g., for elliptic PDEs) of convergence independent of the lattice spacing. However, due to the random nature of the gauge fields involved, the treatment of the lattice Dirac equation by %{\em geometric} multigrid methods, i.e., 
multigrid methods based solely on the underlying PDE, has been elusive for the last twenty years~\cite{Ben-Av:1990gb,Brower:1987dd,Kalkreuter:1994fz,Vink:1991fa}. With the advent of \textit{adaptive algebraic} multigrid methods, effective preconditioners for QCD calculations could be constructed in recent years. The pioneering work from \cite{MGClark2009, MGClark2010_1, MGClark2007, MGClark2010_2} showed very promising results. There, an adaptive non-smoothed aggregation approach based on~\cite{Brezina2005} has been proposed for the solution of the Wilson Dirac system. An implementation is publicly available as part of the QOPQDP package~\cite{wwwQOPQDP}.

Within the physics community, another hierarchical technique, the recently proposed domain decomposition type solver named \textit{inexact deflation} developed in \cite{Luescher2007} is widely used. A well-optimized code for this solver is publicly available~\cite{wwwDDHMC}. Inexact deflation can be regarded as an adaptive method as well. It performs a setup phase which allows the construction of a smaller system, the \textit{little Dirac} operator, which is then used as part of an efficient preconditioner. Although there is an intimate connection with the aggregation based multigrid approach from \cite{Brezina2005}, inexact deflation seems to have been developed completely independently. As a consequence, the inexact deflation method does not resemble a typical multigrid method in the way its ingredients are arranged. In particular, it requires the little Dirac system to be solved to high accuracy in each iteration.

In this paper we present a multigrid method that combines aspects from~\cite{Luescher2007}, namely a domain decomposition smoother, and from non-smoothed aggregation as in \cite{MGClark2010_1, MGClark2010_2}. Our approach elaborates on the multigrid methods from \cite{MGClark2010_1, MGClark2010_2} in that we use a domain decomposition method as the smoother instead of the previously used Krylov subspace smoother. This allows for a natural and efficient parallelization, also on hybrid architectures. Moreover, we substantially improve the adaptive setup from \cite{MGClark2010_1, MGClark2010_2} and \cite{Luescher2007} in the sense that less time is required to compute the operator hierarchy needed for an efficient multigrid method. Our approach can also be regarded as turning the domain decomposition technique from \cite{Luescher2007} into a true multigrid method. The ``little Dirac'' system now needs to be solved only to low accuracy. This allows, in particular, to apply the method recursively and thus opens the way for a more efficient {\em multi}-grid method instead of just a two-grid method. With the inexact deflation approach this is not possible.

The paper is organized as follows. In section~\ref{qcd_section} we give an introduction into lattice QCD for the non-specialist and we introduce the domain decomposition Schwarz method in this context. In section~\ref{section:AMG} we first outline algebraic multigrid methods in general and then focus on aggregation based approaches. Thereby we address the peculiarities of lattice QCD and explain different possible adaptive strategies for the construction of the multigrid hierarchy. The inexact deflation method from \cite{Luescher2007} is discussed in section~\ref{sec:PID}, where in particular we point out the differences to a multigrid method and describe the adaptive nature of its setup. In section~\ref{sec:DDML} we finally formulate our multigrid approach, for which we present thorough numerical tests in section~\ref{sec:NR}.

% -----------------------------------------------------------------------------
\section{Lattice Quantum Chromodynamics} \label{qcd_section}
% -----------------------------------------------------------------------------
Quantum Chromodynamics (QCD) or the theory of strong interactions, is a quantum field theory in four-dimensional space-time and part of the standard model of elementary particle physics. It has a high predictive power, i.e., a small number of free parameters. Predictions that can be deduced from this theory are amongst others the masses of hadrons, composite particles bound by the strong interaction (e.g., nucleon, pion; cf.~\cite{Durr21112008}). The masses of hadrons and many other predictions have to be obtained non-perturbatively, i.e.,  via numerical simulations requiring the discretization and numerical evaluation of the theory. After a brief description of relevant aspects of continuum QCD we introduce its discretization on a hyper-cubic lattice and discuss the need of iterative 
(Krylov subspace) 
methods
%====> cur 
%, e.g., Krylov  subspace methods, 
%====<
for the solution of the arising linear systems of equations. Due to their ill-conditioned nature,  
%====> cut
%of these systems, 
%=====<
preconditioning is advised and the use of a domain decomposition method is discussed as a prerequisite for our multigrid construction. 

% -----------------------------------------------------------------------------
\subsection{Continuum QCD} \label{continuum_qcd}
% -----------------------------------------------------------------------------
A thorough description of QCD as a quantum field theory is beyond the scope of this paper. Instead we just introduce the reader to important concepts and the notation necessary to understand the lattice discretization. 

The degrees of freedom of QCD are matter fields, called quarks, and gauge fields, called gluons. At the heart of the numerical methods for lattice QCD is a discretized version of the continuum Dirac equation
\begin{equation}\label{Dirac_eq}
  (\Dslash+m)\psi = \eta
\end{equation} 
which describes the dynamics of the quarks and the interaction of quarks and gluons for a given gluon field background. Here, $\psi = \psi(x)$ and $\eta = \eta(x)$ represent matter fields. These depend on $x$, the points in space-time, $x=(x_0,x_1,x_2,x_3)$\footnote{Physical space-time is a four dimensional Minkowski space. We present the theory in Euclidean space-time since this version can be treated numerically. The two versions are equivalent, cf.~\cite{montvay1994quantum}.}. The gluons are represented in the Dirac operator $\Dslash$ to be discussed below, and $m$ represents a scalar mass parameter not depending on $x$. This mass parameter sets the mass of the quarks in the QCD theory.

In the continuum theory the Dirac operator $\Dslash$ can be written as
\[
  \Dslash=\sum_{\mu=0}^3\gamma_\mu \otimes \left( \partial_\mu + A_\mu \right)\,,
\]
where $ \partial_\mu = \partial / \partial x_\mu$ and $A_\mu(x)$ is the gauge field. The anti-hermitian traceless matrices $A_\mu(x)$ are elements of $\mathfrak{su}(3)$, the Lie algebra of the special unitary group $\mathrm{SU}(3)$.

The quark fields $\psi$ and $\eta$ in \eqref{Dirac_eq} carry two indices that are suppressed, i.e., $\psi=\psi_{c \sigma}$. These indices label internal degrees of freedom of the quarks; one is called color ($c=1,2,3$) and the other spin ($\sigma=0,1,2,3$). At each point $x$ in space-time, we can represent the spinor $\psi(x)$, i.e., the quark field $\psi$ at a given point $x$, as a twelve component column vector
\begin{equation}
  \psi(x)=(\psi_{10}(x),\psi_{20}(x),\psi_{30}(x),\psi_{11}(x),\ldots,\psi_{33}(x))^T\,.
\end{equation}
In case operations act unambiguously on the color but differently on the spin 
degrees of freedom we use the notation $\psi_\sigma$ to denote those components 
of the quark field belonging to the fixed spin index $\sigma$. For a given point 
$x$, $\psi_{\sigma}(x)$ is thus represented by a three component column vector 
$\psi_\sigma(x)=(\psi_{1 \sigma}(x),\psi_{2 \sigma}(x),\psi_{3 \sigma}(x))^T$. 
The value of the gauge field $A_\mu$ at point $x$ is in the matrix 
representation of $\mathfrak{su}(3)$ and acts non-trivially 
on the color and trivially on 
the spin degrees of freedom, i.e, $(A_\mu\psi)(x) = (I_4 \otimes 
A_\mu(x))\psi(x)$.

The $\gamma$-matrices $\gamma_0,\gamma_1,\gamma_2,\gamma_3 \in  \mathbb{C}^{4 
\times 4}$ act non-trivially on the spin  and trivially on the color degrees of 
freedom, i.e.\ $(\gamma_\mu\psi)(x) = (\gamma_\mu \otimes I_3)\psi(x)$ . They 
are hermitian and unitary matrices which generate a Clifford algebra,  
satisfying
\begin{equation} \label{commutativity_rel:eq}
  \gamma_\mu \gamma_\nu + \gamma_\nu \gamma_\mu = \begin{cases} 2 \cdot I_4 &\mu = \nu\\0 & \mu \neq \nu \end{cases} \quad \text{ for } \mu,\nu=0,1,2,3. 
\end{equation}
Unlike the gauge fields $A_\mu$, the $\gamma$-matrices do not depend on $x$. 
   
The covariant derivative $\partial_\mu+A_\mu$ is a ``minimal coupling extension'' of the derivative $\partial_\mu$, ensuring that $((\partial_\mu+A_\mu)\psi)(x)$ transforms in the same way as $\psi(x)$ under local gauge transformations, i.e., a local change of the coordinate system in color space. As part of the covariant derivative the $A_\mu$'s can be seen as connecting different (but infinitesimally close) space-time points. The combination of covariant derivatives and the $\gamma$-matrices ensures that $\Dslash\psi(x)$ transforms under the space-time transformations of special relativity in the same way as $\psi(x)$. Local gauge invariance and special relativity are fundamental principles of the standard model of elementary particle physics.

% -----------------------------------------------------------------------------
\subsection{Lattice QCD} \label{lattice_qcd}
% -----------------------------------------------------------------------------
In order to compute predictions in QCD from first principles and non-perturbatively, the theory of QCD has to be discretized and simulated on a computer. The discretization error is then accounted for by extrapolation to the continuum limit based on simulations at different lattice spacings. One of the most expensive tasks in these computations is the solution of the discretized Dirac equation for a given right hand side. In this section we give a brief introduction into the principles of this discretization and discuss some properties of the arising linear operators. Since the discretization is typically formulated on an equispaced lattice, this treatment of QCD is also referred to as lattice QCD. For a more detailed introduction to QCD and lattice QCD we refer the interested reader to \cite{DeGrand:2006zz, Gattringer:2010zz, montvay1994quantum}.

Consider a four-dimensional torus $\mathcal{T}$. On $\mathcal{T}$ we have a 
periodic $N_t \times N_s^3$ lattice $\mathcal{L} \subset \mathcal{T}$ with 
lattice spacing $a$ and $n_{\mathcal{L}} = N_t \cdot N_s^3$ lattice points. In 
here $N_s$  denotes the number of lattice points for each of the three space 
dimensions and $N_t$ the number of lattice points in the time dimension. Hence, 
for any $x,y \in \mathcal{L}$ there exists $p \in \mathbb{Z}^4$ such that 
$$
 y = x + a \cdot p, \quad \mbox{i.e., } \,  y_\mu = x_\mu + a \cdot p_\mu  \text{ for }  \mu=0,1,2,3.
$$
For shift operations on the lattice, we define shift vectors $ \hat{\mu} \in \mathbb{R}^4 $ by
$$
  \hat{\mu}_{\nu} = \begin{cases} a & \mu=\nu \\ 0 & \text{else.} \end{cases}
$$
A quark field $\psi$ is defined at each point of the lattice, i.e., it is a function
$$
  \begin{array}[h]{rrcl}
  \psi: & \mathcal{L} & \rightarrow & \mathbb{C}^{12} \\
        & x           & \mapsto     & \psi(x)
  \end{array}
$$
on the lattice $\mathcal{L}$ which maps a point $x \in \mathcal{L}$ to a spinor $\psi(x)$. As in continuum QCD, this spinor again has color and spin indices $\psi_{c \sigma}, \; c=1,2,3, \; \sigma=0,1,2,3$. For future use we introduce the symbols $\mathcal{C}$ and $\mathcal{S}$ for the color and the spin space, i.e.,
\[
\mathcal{C} = \{1,2,3\}, \quad \mathcal{S} = \{0,1,2,3\}.
\]

The gauge fields $A_{\mu}(x)$ connecting infinitesimally close space-time points in continuum QCD have to be replaced by objects that connect points at \emph{finite} distances. To this purpose variables $U_\mu(x)$ are introduced. $U_\mu(x)$ connects $x$ and $x+\hat{\mu}$, so that we regard $U_\mu(x)$ as being associated with the {\em link} between $x$ and $x+\hat{\mu}$. The link between $x+\hat{\mu}$ and $x$, pointing in the opposite direction, is then given by $U_\mu(x)^{-1}$. The matrices $U_\mu(x)$ are an approximation to the path-ordered exponential of the integral of $A_\mu$ along the link. They satisfy
\begin{equation*}
  U_\mu(x) \in \mathrm{SU}(3), \text{ in particular } U_\mu(x)^{-1} = U_\mu(x)^H.
\end{equation*}
Figure \ref{geometry} illustrates the naming conventions on the lattice. $U_\mu(x)$ is called a \emph{gauge link} and the set of all gauge links $\{ U_\mu(x) \, : \, x\in \mathcal{L}, \, \mu=0,1,2,3 \}$ is called \emph{configuration}.

\begin{figure}
  \begin{minipage}[t]{0.5\textwidth}
    \centering\scalebox{0.7}{\input{./figs/geometry}}
    \caption{Naming conventions on the lattice.}
    \label{geometry}
  \end{minipage}
  \hfill
  \begin{minipage}[t]{0.4\textwidth}
    \centering\scalebox{0.7}{\input{./figs/clover1}}
    \caption{The clover term.}
    \label{clover1}
  \end{minipage}
\end{figure}

The covariant derivative of the continuum theory can be discretized in many ways. Here we restrict ourselves to the widely used Wilson discretization (cf.~\cite{Wilson:1975id}), noting that the multigrid solver developed in this paper is in principle applicable to any discretization resulting in local couplings.
%===> no new paragraph
We define forward covariant finite differences
$$
  \left(\Delta^\mu \psi_\sigma\right)(x) = \tfrac{1}{a}\left( {U_\mu(x) \psi_\sigma(x+\hat\mu) - \psi_\sigma(x)}\right)  \stackrel{\cdot}{=} (\partial_\mu + A_\mu) \psi_\sigma(x)
$$
and backward covariant finite differences
$$
  \left(\Delta_\mu \psi_\sigma\right)(x) = \tfrac{1}{a} \left(\psi_\sigma(x) - U^H_{\mu}(x-\hat\mu) \psi_\sigma(x-\hat\mu)\right) \; .
$$
Since $( \Delta^\mu )^H = -\Delta_\mu$, the centralized covariant finite differences $(\Delta^\mu + \Delta_\mu)/2$ are anti-hermitian. The simplest discretization of the Dirac operator $\Dslash$ is then given by
$$ \textstyle
  D_N = \sum_{\mu=0}^3 \gamma_\mu \otimes \left(\Delta_\mu + \Delta^\mu\right)/2.
$$
This naive discretization generates unphysical eigenvectors, a standard phenomenon when discretizing first order derivatives using central finite differences, cf.~\cite{smith1985numerical}, also known as the ``species doubling effect'' or ``red-black instability''. The eigenspace for each eigenvalue of $D_N$ has dimension $16$, but only a one-dimensional subspace corresponds to an eigenfunction of the continuum operator. Wilson introduced the stabilization term $a\Delta_\mu \Delta^\mu$, a centralized second order covariant finite difference, to avoid this problem. The Wilson discretization of the Dirac operator is thus given by
\begin{equation}\label{WilsonDirac_eq}  \textstyle
  D_W = \frac{m_0}{a} I \;\; + \;\; \frac{1}{2} \sum_{\mu=0}^3 \Big( \gamma_\mu \otimes ( \Delta_\mu + \Delta^\mu ) - aI_4 \otimes \Delta_\mu \Delta^\mu \Big),
\end{equation}
where the mass parameter $m_0$ sets the quark mass (for further details, see~\cite{montvay1994quantum}).

The commutativity relations \eqref{commutativity_rel:eq} of the  $\gamma$-matrices imply a non-trivial symmetry of $D_W$. Defining  $\gamma_5 = \gamma_0\gamma_1\gamma_2\gamma_3$ we have $\gamma_5 \gamma_\mu = - \gamma_\mu \gamma_5$ for $\mu=0,1,2,3$, and since $\gamma_\mu$ and $\gamma_5$ are hermitian we see that $\gamma_5 \gamma_\mu$ is anti-hermitian. Thus the operator $(\gamma_5 \gamma_\mu) \otimes ( \Delta_\mu + \Delta^\mu)$ is hermitian, being the product of two anti-hermitian operators. To describe the resulting {\em $\Gamma_5$-symmetry} of the Wilson Dirac operator, we define $\Gamma_5 = I_{n_{\mathcal{L}}} \otimes \gamma_5 \otimes I_3 $ and have $(\Gamma_5 D_W)^H = \Gamma_5 D_W$.

To reduce the order of the discretization error as a function of $a$, the Sheik\-ho\-les\-la\-mi-Woh\-lert or ``clover'' term (cf.~\cite{Sheikholeslami:1985ij} and Figure \ref{clover1}), depending on a parameter $c_{sw}$, is added to the lattice Wilson Dirac operator 
\begin{equation} \label{eq:lattice_WD}
  D = D_W \;\; - \;\; \frac{c_{sw}}{32a} \sum_{\mu,\nu=0}^3 ( \gamma_\mu \gamma_\nu ) \otimes ( Q_{\mu\nu} - Q_{\nu\mu} ),
\end{equation}
where $\left(Q_{\mu\nu}\psi_\sigma\right)(x) = Q_{\mu\nu}(x) \psi_\sigma(x)$ with
$$
  \begin{array}[h]{rcl}
    Q_{\mu\nu}(x) & = & U_{\mu}(x) \, U_{\nu}(x+\hat\mu) \, U_{\mu}(x+\hat\nu)^{H} \, U_{\nu}(x)^{H} + \\
                  &   & U_{\nu}(x) \, U_{\mu}(x-\hat\mu+\hat\nu)^{H} \, U_{\nu}(x-\hat\mu)^{H} \, U_{ \mu}(x-\hat\mu) + \\
                  &   & U_{\mu}(x-\hat\mu)^{H} \, U_{\nu}(x-\hat\mu-\hat\nu)^{H} \, U_{ \mu}(x-\hat\mu-\hat\nu) \, U_{ \nu}(x-\hat\nu) + \\
                  &   & U_{\nu}(x-\hat\nu)^{H} \, U_{\mu}(x-\hat\nu) \, U_{ \nu}(x-\hat\nu+\hat\mu) \, U_{\mu}(x)^{H}. \\
  \end{array}
$$ 
The clover term is diagonal on the lattice $\mathcal{L}$. It removes $\mathcal{O}(a)$-discretization effects from the covariant finite difference discretization of the covariant derivative (for appropriately tuned $c_{\mathit{sw}}$; see \cite{Sheikholeslami:1985ij} and references therein). The resulting discretized Dirac operator $D$ thus has discretization effects of order $\mathcal{O}(a^2)$. It is again $\Gamma_5$-symmetric, i.e., we have 
\begin{equation} \label{eq:g5sym}
  (\Gamma_5 D)^H = \Gamma_5 D.
\end{equation}%
The $\Gamma_5$-symmetry induces a symmetry on the spectrum of $D$: 
% For future use, we state this in the following lemma. 

\begin{lemma} \label{lem:g5sym} Every right eigenvector $\psi_\lambda$ to an eigenvalue $\lambda$ of $D$ corresponds to a left eigenvector $\hat\psi_{\bar\lambda} = \Gamma_5 \psi_{\lambda}$
to the eigenvalue $\bar\lambda$ of $D$ and vice versa. In particular, the spectrum of $D$ is symmetric with respect to the real axis.
\end{lemma}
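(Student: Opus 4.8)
The plan is to turn the $\Gamma_5$-symmetry \eqref{eq:g5sym} into an explicit similarity between $D$ and $D^H$ and then read off both claims. As a preliminary step I would record that $\Gamma_5$ is a Hermitian involution: from the Clifford relations \eqref{commutativity_rel:eq} and the hermiticity of the $\gamma_\mu$, a short reordering argument shows that $\gamma_5=\gamma_0\gamma_1\gamma_2\gamma_3$ satisfies $\gamma_5^H=\gamma_5$ and $\gamma_5^2=I_4$; tensoring with identities then gives $\Gamma_5^H=\Gamma_5$ and $\Gamma_5^2=I$, so $\Gamma_5$ is unitary and equal to its own inverse. This is the only place where a (routine) computation with the $\gamma$-matrices is needed.

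Next I would rewrite \eqref{eq:g5sym}. Using $\Gamma_5^H=\Gamma_5$, the identity $(\Gamma_5 D)^H=\Gamma_5 D$ becomes $D^H\Gamma_5=\Gamma_5 D$, and multiplying by $\Gamma_5$ on the right together with $\Gamma_5^2=I$ yields $D^H=\Gamma_5 D\Gamma_5$. Hence $D$ and $D^H$ are similar. Since the eigenvalues of $D^H$ are exactly the complex conjugates of the eigenvalues of $D$, similarity gives $\sigma(D)=\sigma(D^H)=\overline{\sigma(D)}$, which is precisely the asserted symmetry of the spectrum about the real axis.

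For the eigenvector correspondence, assume $D\psi_\lambda=\lambda\psi_\lambda$ with $\psi_\lambda\neq 0$. Applying $\Gamma_5$ and using $\Gamma_5 D=D^H\Gamma_5$ gives $D^H(\Gamma_5\psi_\lambda)=\lambda(\Gamma_5\psi_\lambda)$, and $\Gamma_5\psi_\lambda\neq 0$ because $\Gamma_5$ is invertible; taking the conjugate transpose of this identity yields $(\Gamma_5\psi_\lambda)^H D=\bar\lambda(\Gamma_5\psi_\lambda)^H$, i.e.\ $\hat\psi_{\bar\lambda}:=\Gamma_5\psi_\lambda$ is a left eigenvector of $D$ for the eigenvalue $\bar\lambda$. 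The converse direction is identical: starting from a left eigenvector $\hat\psi^H D=\bar\lambda\hat\psi^H$, multiply by $\Gamma_5$ on the right and transpose; the involution property $\Gamma_5^2=I$ makes the map $\psi_\lambda\mapsto\Gamma_5\psi_\lambda$ its own inverse, so the correspondence is bijective.

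I do not expect any genuine obstacle here. The only points requiring care are the bookkeeping showing $\Gamma_5^H=\Gamma_5$ and $\Gamma_5^2=I$, and the (standard) observation that conjugate-transposing an eigen-equation for $D^H$ produces a left eigen-equation for $D$; everything else is a two-line manipulation of \eqref{eq:g5sym}.
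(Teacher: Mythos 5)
Your proposal is correct and follows essentially the same route as the paper: both derive $D^H=\Gamma_5 D\Gamma_5$ from \eqref{eq:g5sym} and convert the right eigen-equation into a left eigen-equation by conjugate transposition, the symmetry of the spectrum then following immediately. The only difference is that you spell out the routine facts $\Gamma_5^H=\Gamma_5$, $\Gamma_5^2=I$ and the similarity argument, which the paper leaves implicit.
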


{\em Proof.} Due to $D^H = \Gamma_5 D \Gamma_5$ we have 
\[
  D\psi_\lambda = \lambda\psi_\lambda \Leftrightarrow \psi_\lambda^H D^H = \bar{\lambda} \psi_\lambda^H \Leftrightarrow  (\Gamma_5 \psi_\lambda)^H D = \bar{\lambda} (\Gamma_5 \psi_\lambda)^H. \quad \quad \Box
\]

Summarizing, $D \in \mathbb{C}^{n\times n}$ is a sparse matrix which represents a nearest neighbor coupling on a periodic 4D lattice. The lattice has $n_{\mathcal{L}} = N_tN_s^3$ sites, each holding 12 variables, so that $n = 12n_{\mathcal{L}}$. $D$ has the symmetry property~\eqref{eq:g5sym}, depends on a mass parameter $m_0$, the Sheikholes\-la\-mi-Wohlert constant $c_{sw}$, and a configuration $\{ U_\mu(x) \, : \, x \in \mathcal{L} , \, \mu=0,1,2,3 \}$. In practice $m_0$ is negative, and for physically relevant mass parameters, the spectrum of $D$ is contained in the right half plane, cf.~Fig.~\ref{figure:spectrum} and Fig.~\ref{figure:spectrum_csw}.

\begin{figure}
  \begin{minipage}[t]{0.48\textwidth}
    \centering\scalebox{0.65}{\input{./plots/plot_spectrum1}}
    \caption{Spectrum of a $4^4$ Wilson Dirac operator; $m_0 = 0$, $c_\mathit{sw} = 0$.}
    \label{figure:spectrum}
  \end{minipage}
  \hfill
  \begin{minipage}[t]{0.48\textwidth}
    \centering\scalebox{0.65}{\input{./plots/plot_spectrum1csw}}
    \caption{Spectrum of a $4^4$ ``Clover improved'' operator; $m_0 = 0$, $c_\mathit{sw} = 1$.}
    \label{figure:spectrum_csw}
  \end{minipage}
\end{figure}

While the continuum Dirac operator is normal, the Wilson Dirac operator is not, but it approaches normality when discretization effects become smaller. For small lattice spacing, large lattice sizes and physically relevant mass parameters we can thus expect that the whole field of values $\mathcal{F}(D) = \{ \psi^HD\psi : \psi^H\psi = 1\}$ of $D$ is in the right half plane.

To explicitly formulate $D$ in matrix terms we fix a representation for the $\gamma$-matrices as
\begin{equation*}\label{eq:chiralgamma}
  \gamma_{0} = \small{
  \begin{pmatrix}
    \phantom{i} & \phantom{i} & \phantom{i} & i \\
    & & i & \phantom{i} \\
    & \makebox[0cm][c]{$-i$} & & \\
    \makebox[0cm][c]{$-i$} & & &
  \end{pmatrix}},
  \gamma_{1} = \small{ \left( \;
  \begin{matrix}
    \phantom{1} & \phantom{1} & \phantom{1} & \makebox[0cm][c]{$-1$} \\
    & & 1 & \phantom{1} \\
    & 1 & & \\
    \makebox[0cm][c]{$-1$} & & &
  \end{matrix}
  \; \right)},  
  \gamma_{2} = \small{ \left(
  \begin{matrix}
    \phantom{i} & \phantom{i} & i & \phantom{i}\\
    & & \phantom{i} & \makebox[0cm][c]{$-i$}\\
    \makebox[0cm][c]{$-i$} & & & \\
    & i & &
  \end{matrix}
  \; \right)},
  \gamma_{3} = \small{
  \begin{pmatrix}
    \phantom{1} & \phantom{1} & 1 & \phantom{1} \\
    & & \phantom{} & 1 \\
    1 & & & \\
    & 1 & &
  \end{pmatrix}
  },
\end{equation*}
resulting in
\[
\gamma_5 = \diag(1,1,-1-1).
  %\gamma_5 = \begin{pmatrix} 1 & \phantom{1} & \phantom{1} & \phantom{1} \\
  %                             & 1           &             &             \\
  %                             &             & $-1$        &             \\
  %                             &             &             & $-1$        \\ 
  %\end{pmatrix}.
\]
Thus $\gamma_5$ acts as the identity on spins 0 and 1 and as the negative identity on spins 2 and 3. $D$ is then given via
\begin{eqnarray*}
  (D\psi)(x) &=& \frac{1}{a} \Big( (m_0+4) I_{12} - \frac{c_{sw}}{32} \sum_{\mu,\nu=0}^3 ( \gamma_\mu \gamma_\nu ) \otimes \big( Q_{\mu\nu}(x) - Q_{\nu\mu}(x) \big) \Big) \psi(x) \\
             & &  \mbox{} - \frac{1}{2a}\sum_{\mu=0}^3 \left( (I_4-\gamma_\mu)\otimes U_\mu(x)\right) \psi(x+\hat{\mu}) \\
             & & \mbox{} - \frac{1}{2a}\sum_{\mu=0}^3 \left( (I_4+\gamma_\mu)\otimes U_\mu^H(x-\hat{\mu})\right) \psi(x-\hat{\mu}) .
\end{eqnarray*}

% -----------------------------------------------------------------------------
\subsection{Domain Decomposition in Lattice QCD} \label{dd_in_lattice_qcd}
% -----------------------------------------------------------------------------
For ease of notation we from now on drop the lattice spacing $a$, so that the lattice $\mathcal{L}$ is given as 
\[
  \mathcal{L} = \{ x = (x_0,x_1,x_2,x_3), 1 \leq x_0 \leq N_t, \, 1 \leq x_1,x_2,x_3 \leq N_s \}.
\]

Let us also reserve the terminology {\em block decomposition} for a tensor type decomposition of $\mathcal{L}$ into lattice-blocks. The precise definition is as follows.

\begin{definition}
Assume that $\{ \mathcal{T}^0_{1},\ldots, \mathcal{T}^0_{\ell_0}\}$ is a partitioning of $\{1,\ldots,N_t\}$ into $\ell_0$ blocks of consecutive time points,
\[
  \mathcal{T}^0_{j} = \{ t_{j-1}+1,\ldots,t_{j}\}, \quad j=1,\ldots,\ell_0, \, 0=t_0 < t_1 \ldots < t_{\ell_0} = N_t,
\]
and similarly for the spatial dimensions with partitionings $\{ \mathcal{T}^\mu_{1},\ldots, \mathcal{T}^\mu_{\ell_\mu}\}, \, \mu=1,2,3$.

A {\em block decomposition} of $\mathcal{L}$ is a partitioning of $\mathcal{L}$ into $\ell= \ell_0\ell_1\ell_2\ell_3$ lattice-blocks $\mathcal{L}_i$, where each lattice-block is of the form 
\[
  \mathcal{L}_i = \mathcal{T}^0_{j_0(i)} \times \mathcal{T}^1_{j_1(i)} \times \mathcal{T}^2_{j_2(i)} \times \mathcal{T}^3_{j_3(i)}.
\]
Accordingly we define a block decomposition of all $12n_\mathcal{L}$ variables in $\mathcal{V} = \mathcal{L}\times\mathcal{C}\times\mathcal{S}$ into $\ell$ blocks $\mathcal{V}_i$ by grouping all spin and color components from the lattice-block $\mathcal{L}_i$, 
%i.e.,
\begin{equation} \label{eq:variable_blocks}
  \mathcal{V}_i = \mathcal{L}_i\times\mathcal{C}\times\mathcal{S}.
\end{equation}
\end{definition}

Since the systems arising in lattice QCD calculations tend to have hundreds of millions of unknowns they require the use of parallel computers. For this reason and due to the fact that, as a rule, naive domain decomposition is already used to parallelize the matrix vector product $Dz$ which is needed for Krylov subspace methods, it is natural to also use a domain decomposition approach as a preconditioner.

The method of choice here is a colored version of the multiplicative Schwarz method \cite{Schwarz1870, BFSmith_PEBjorstad_WDGropp_1996a}. Since the discretized Dirac operator has only nearest-neighbor couplings, only two colors are needed. %This red-black Schwarz method for the solution of $Dz=b$ is given in Algorithm~\ref{rbsap} 
For a block decomposition of the lattice and variable blocks $\mathcal{V}_i$ according to \eqref{eq:variable_blocks}, let 
% Figure \ref{redblack_small} illustrates a $2$D example. 
the corresponding trivial embeddings, block systems and block solvers be denoted by
\[
  \mathcal{I}_{\mathcal{V}_i} :  \mathcal{V}_i \rightarrow \mathcal{V}, \quad D_i = \mathcal{I}_{\mathcal{V}_i}^T D \mathcal{I}_{\mathcal{V}_i} \quad \text{and} \quad B_i = \mathcal{I}_{\mathcal{V}_i} D_i^{-1} \mathcal{I}_{\mathcal{V}_i}^T.
\]
%In each iteration, for the updates in lines~\ref{line:redupdate} and \ref{line:blackupdate} we have to solve the local systems
%\begin{equation} \label{eq:localsystem}
%  D_i e_i =  \mathcal{I}_{\mathcal{V}_i}^Tr \quad \mbox{with} \quad D_i = \mathcal{I}_{\mathcal{V}_i}^T D \mathcal{I}_{\mathcal{V}_i},
%\end{equation}
%$D_i$ representing the restriction of $D$ on the variable block $\mathcal{V}_i$.

%\begin{algorithm}[H]
%  \caption{Red-Black Schwarz}\label{alg:Schwarz}
%  \label{rbsap}
%  \begin{algorithmic}[1]
%    \STATE input: $z$, $b$, $\nu$
%    \STATE output: $z$
%    \FOR{$k=1$ to $\nu$}
%     \STATE $r \leftarrow b-Dz$
%      \FORALL{{\it red} $i$}
%        \STATE $z \leftarrow z+B_i r$ \label{line:redupdate}
%      \ENDFOR
%      \STATE $r \leftarrow b-Dz$
%      \FORALL{{\it black} $i$}
%        \STATE $z \leftarrow z+B_i r $ \label{line:blackupdate}
%      \ENDFOR
%    \ENDFOR
%  \end{algorithmic}
%\end{algorithm}
For red-black multiplicative Schwarz the lattice blocks are divided into two groups 
(red and black) such that no equation in $D$ couples variables from different 
blocks of the same color. Given the residual $r=b-Dx$, the solutions $e_i$ 
of the local systems
\begin{equation} \label{eq:localsystem}
  D_i e_i =  \mathcal{I}_{\mathcal{V}_i}^Tr,
\end{equation}
yield the corrections for the iterate $x$. More precisely,  
with the shorthand $B_{color} = \sum_{i \in \text{\it color\ }} B_{i}$ and
\[
  K = B_{\text{\it black\ }}(I - D\, B_{\text{\it red\ }}) + B_{\text{\it red\ }}
\]
we can summarize one iteration ($\nu = 1$) of red-black multiplicative Schwarz %Algorithm~\ref{rbsap} 
as (cf.~\cite{BFSmith_PEBjorstad_WDGropp_1996a})
\begin{equation} \label{SAP_1step:eq}
  z \leftarrow (I-KD)z + Kb.
\end{equation}
Since the solution $z^* = D^{-1}b$ satisfies $z^* = (I-KD)z^* + Kb$, the update for the error is
%one iteration of Algorithm~\ref{rbsap} updates the error $e = z-z^*$ via
$e \leftarrow (I-KD)e$, with $I-KD$ the error propagation operator,
\begin{equation*}
  \ESAP = I - KD = (I-B_{\text{\it black\ }}D)(I-B_{\text{\it red\ }}D) \, .
\end{equation*}

The red-black Schwarz method has been introduced to lattice QCD in~\cite{Luescher2003} and has been used ever since in several lattice QCD implementations as a preconditioner (cf.~\cite{Aoki:2008sm,Nobile2012,Luescher2007}). In this context red-black Schwarz is also known as Schwarz Alternating Procedure (SAP). In what follows the application of $\nu$ iterations of SAP to a vector $b$ with initial guess $z=0$ is denoted by the linear operator
\begin{equation*}
  \MSAP{\nu} b = \sum_{k=0}^{\nu-1} (I-KD)^k \, K \, b \,.
\end{equation*}
This representation follows by repeated application of \eqref{SAP_1step:eq}. Note that $ \ESAP = I - \MSAPone D $ with $ \MSAPone = \MSAP{1} $.

\begin{figure}[ht]
  \begin{minipage}[t]{0.45\textwidth}
    \centering \scalebox{0.9}{\input{./figs/redblack_small}}
    \caption{Block decomposed lattice (reduced to $2$D) with $2$ colors}
    \label{redblack_small}
  \end{minipage}
\hfill
  \begin{minipage}[t]{0.5\textwidth}
    \centering \scalebox{0.74}{\input{plots/plot_smoother_on_spectrum.tex}}
    \caption{Error component reduction on a $4^4$ lattice with block size $2^4$}
    \label{errSAP}
  \end{minipage}
\end{figure}

Typically the solution of the local systems \eqref{eq:localsystem}, required when computing $B_ir$, is approximated by a few iterations of a Krylov subspace method (e.g., GMRES). When incorporating such an approximate solver, the SAP method becomes a non-stationary iterative process. Hence it is necessary to use flexible Krylov subspace methods like FGMRES or GCR in case that SAP is used as a preconditioner (cf.~\cite{Nobile2012, Luescher2003, Saad:2003:IMS:829576}).

It turns out that SAP as a preconditioner is not able to remedy the unfavorable 
scaling behavior of Krylov subspace methods with respect to system size, quark 
mass and physical volume. When analyzing this behavior, one realizes that SAP 
reduces error components belonging to a large part of the spectrum very well, 
but a small part is almost not affected by SAP. We illustrate this in 
Figure~\ref{errSAP} where the horizontal axis represents the eigenvectors $v$ of 
$D$ in ascending order of the absolute value of the corresponding 
eigenvalue; the vertical axis gives the ratio $\|\ESAP v\|/\|v\|$. The 
ratio is small for larger eigenvalues and becomes significantly larger for the 
small eigenvalues. This behavior is typical for a smoother in an algebraic 
multigrid method which motivated us to use SAP in this context.

% -----------------------------------------------------------------------------
\section{Algebraic Multigrid Methods} \label{section:AMG}
% -----------------------------------------------------------------------------
Any multigrid method consists of two components, a smooth\-er and a coarse grid correction \cite{Brezina2005,hackbusch2003multi,RugeStueben87,UTrottenberg_etal_2001}. Typically, the smoother can be chosen as a simple iterative method. This can be a relaxation scheme like Jacobi or Gauss-Seidel or their block variants as well as Krylov subspace methods. Given the properties of SAP presented in the previous section we choose SAP as our smoothing scheme in the QCD context.

Let us reserve the term {\em near kernel} for the space spanned by the eigenvectors belonging to small (in modulus) eigenvalues of $D$. Since SAP is not able to sufficiently remove error components belonging to the near kernel (cf.~Figure \ref{errSAP}), the multigrid method treats these persistent error components separately in a smaller subspace with $n_c$ variables. Thus, this subspace should approximate the near kernel. The typical algebraic multigrid setup is then as follows: We have to find an operator $D_c$ which resembles $D$ on that subspace both in the sense that it acts on the near kernel in a similar manner as $D$ does, but also in terms of the connection structure and sparsity. The latter allows to work on $D_c$ recursively using the same approach, thus going from two-grid to true multigrid. We also need a linear map $R: \mathbb{C}^{n} \rightarrow \mathbb{C}^{n_c}$ to restrict information from the original space to the subspace and a linear map $P: \mathbb{C}^{n_c} \rightarrow \mathbb{C}^{n}$ which transports information back to the original space. The coarse grid correction for a current iterate $z$ on the original space is then obtained by restricting the residual $r= b-Dz$ to the subspace, there solving 
\begin{equation} \label{coarse_eq}
  D_c e_c = Rr
\end{equation}
and transporting the coarse error $e_c$ back to the original space as a correction for $z$, resulting in the subspace correction
\begin{equation} \label{coarse_grid_correction}
  z \leftarrow z + P D_c^{-1} R r, \, \quad r=b-Dz
\end{equation}
with the corresponding error propagator
$$
  I - P D_c^{-1} R D.
$$

Typically, the coarse grid system is obtained as the Petrov-Galerkin projection with respect to $P$ and $R$, i.e.,
\[
  D_c = RDP.
\]
The coarse grid correction $I-P(RDP)^{-1}RD$ then is a projection onto $\range(RD)^\perp$ along $\range(P)$. The action of the coarse grid correction is thus complementary to that of the smoother if $\range(P)$ approximately contains the near kernel and $\range(RD)^\perp$ approximately contains the remaining complementary  eigenvectors (which are then efficiently reduced by the smoother). The latter condition is satisfied if $\range(R)$ approximately contains the {\em left} eigenvectors corresponding to the small eigenvalues. This can be seen by looking at exact eigenvectors: Since left and right eigenvectors are mutually orthogonal, if $\range(R)= \range(RD)$ is spanned by left eigenvectors of $D$, then $\range(R)^\perp$ is spanned by the complementary right eigenvectors of $D$.

Once $D_c$ is found a basic two-level algorithm consists of alternating the application of the smoother and the coarse grid correction. This procedure can be recursively extended to true multigrid by formulating a two-level algorithm of this kind for the solution of \eqref{coarse_eq} until we obtain an operator which is small enough to solve \eqref{coarse_eq} directly.

To be computationally beneficial, solving \eqref{coarse_eq} has to be much cheaper than solving the original equation $Dz = b$. For this purpose $D_c$ has to be very small or sparse. As the number of eigenvectors that are not sufficiently reduced by the SAP smoother grows with $n$, cf.~\cite{Banks:1979yr}, one should not aim at fixing $n_c$ (like in deflation methods), but at finding sparse matrices $R$ and $P$ whose ranges approximate the left and right near kernel of $D$ well, respectively.

% -----------------------------------------------------------------------------
\subsection{Aggregation-based Intergrid Transfer Operators} \label{sec:aggAMG}
% -----------------------------------------------------------------------------
Consider a block decomposition of the lattice $\mathcal{L}$ with lattice-blocks $\mathcal{L}_i$. It has been observed in \cite{Luescher2007} that eigenvectors belonging to small eigenvalues of $D$ tend to (approximately) coincide on a large number of lattice-blocks $\mathcal{L}_i$, a phenomenon which was termed ``local coherence''. Local coherence means in particular that we can represent many  eigenvectors with small eigenvalues from just a few by decomposing them into the parts belonging to each of the lattice-blocks. We refer to~\cite{Luescher2007} for a detailed qualitative analysis of this observation. Local coherence is the philosophy behind the aggregation-based intergrid transfer operators introduced in a general setting in \cite{DBraess_1995, Brezina2005} and applied to QCD problems in  
\cite{MGClark2010_1,MGClark2007,MGClark2010_2}.

\begin{definition} An {\em aggregation} $\{\mathcal{A}_1,\ldots,\mathcal{A}_s\}$ is a partitioning of the set $\mathcal{V} = \mathcal{L}\times\mathcal{C}\times\mathcal{S}$ of all variables. It is termed a {\em lattice-block based} aggregation if each $\mathcal{A}_i$ is of the form
\[  
  \mathcal{A}_i = \mathcal{L}_{j(i)} \times \mathcal{W}_{i},
\]
where $\mathcal{L}_j$ are the lattice-blocks of a block decomposition of $\mathcal{L}$ and $\mathcal{W}_i \subseteq \mathcal{C} \times \mathcal{S}$. 
\end{definition}

Aggregates for the lattice Wilson Dirac operator \eqref{eq:lattice_WD}  will typically be realized as lattice-block based aggregates. Note that, however, the SAP smoother on the one hand and interpolation and restriction on the other hand do not have to be based on a {\em common} block decomposition of $\mathcal{L}$.

Starting from a set of {\em test vectors} $\{ v_1,\ldots, v_N \}$ which represent the near kernel and a set of aggregates $ \{ \mathcal{A}_1,\ldots, \mathcal{A}_s \} $, the interpolation $P$ is obtained by decomposing the test vectors over the aggregates
\begin{equation}\label{eq:aggInterpolation}
  (v_1 \mid \ldots \mid v_N) = 
  \left( \begin{array}{|c|}
      \hline
      \cline{1-1}
      \multicolumn{1}{|c|}{} \\
      \multicolumn{1}{|c|}{} \\
      \multicolumn{1}{|c|}{} \\
      \multicolumn{1}{|c|}{} \\
      \multicolumn{1}{|c|}{} \\
      \multicolumn{1}{|c|}{} \\
      \multicolumn{1}{|c|}{} \\
      \hline
    \end{array} \right)
  \longrightarrow
  P = \left( \begin{array}{c c c c c c c c}
      \cline{1-1}
      \multicolumn{1}{|c|}{} & & & \\
      \multicolumn{1}{|c|}{} & & & \\
      \cline{1-2}
      & \multicolumn{1}{|c|}{} & & \\
      & \multicolumn{1}{|c|}{} & & \\
      \cline{2-2}
      & & \ddots & \\
      \cline{4-4}
      & & & \multicolumn{1}{|c|}{} \\
      & & & \multicolumn{1}{|c|}{} \\
      \cline{4-4}
    \end{array} \right)
  \begin{array}{c}
    \multirow{2}{*}{$\mathcal{A}_1$} \\ 
    \\
    \multirow{2}{*}{$\mathcal{A}_2$} \\
    \\
    \vdots \\
    \multirow{2}{*}{$\mathcal{A}_s$} \\   
    \\
  \end{array}.
\end{equation}
Formally, each aggregate $\mathcal{A}_i$ induces $N$ variables $(i-1)N+1,\ldots, iN$ on the coarse system, and we define
\begin{equation} \label{eq:aggInterpolation2}
  Pe_{(i-1)N+j} = \mathcal{I}^T_{\mathcal{A}_i}v_j, \quad i=1,\ldots,s, \, j=1,\ldots,N.
\end{equation}
Herein, $\mathcal{I}_{\mathcal{A}_i}$ represents the trivial restriction operator for the aggregate $\mathcal{A}_i$, i.e., $\mathcal{I}^T_{\mathcal{A}_i}v_j$ leaves the components of $v_j$ from $\mathcal{A}_{i}$ unchanged while zeroing all others, and $e_{(i-1)N+j}$ denotes the $(i-1)N+j$-th unit vector. For the sake of stability, the test vectors are orthonormalized locally, i.e., for each $i$ we replace $\mathcal{I}_{\mathcal{A}_i}^Tv_j$ in \eqref{eq:aggInterpolation2} by the $j$-th basis vector of an orthonormal basis of $\mbox{span}(\mathcal{I}^T_{\mathcal{A}_i}v_1,\ldots,\mathcal{I}^T_{\mathcal{A}_i}v_N)$. This does not alter the range of $P$ nor does it change the coarse grid correction operator $I-P(RDP)^{-1}RD$, and it ensures $P^HP = I$.

The restriction $R$ is obtained in an analogous manner by using a set of test vectors $\{ \hat{v}_1,\ldots, \hat{v}_N \}$ and the same aggregates to build $R^H$. Figure \ref{aggregation} illustrates a lattice-block based aggregation from a lattice point of view---again reduced to two dimensions---where in each aggregate $\mathcal{A}$ we take $\mathcal{W}_i$  as the whole set $\mathcal{C} \times \mathcal{S}$. Then the aggregates can be viewed as forming a new, coarse lattice and the sparsity and connection structure of $D_c = RDP$ resembles the one of $D$, i.e., we have again a nearest neighbor coupling. Each lattice point of the coarse grid, i.e., each aggregate, holds $N$ variables.  

\begin{figure}[ht]
  \centering \scalebox{0.7}{\input{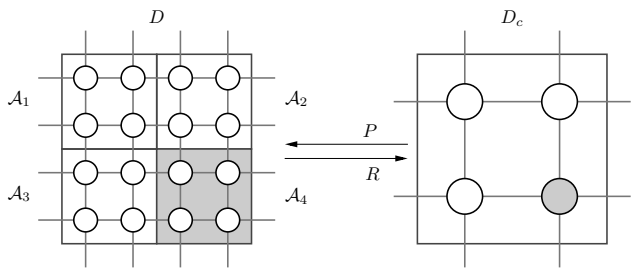}}
  \caption{Aggregation-based interpolation (geometrical point of view reduced to $2$D)}
  \label{aggregation}
\end{figure}

% -----------------------------------------------------------------------------
\subsection{Petrov-Galerkin Approach in Lattice QCD} \label{aggregation_based_interpolation_lattice_qcd}
% -----------------------------------------------------------------------------
The structure and the spectral properties of the Wilson Dirac operator $D$ suggest to explicitly tie the restriction $R$ to the interpolation $P$. The following construction of $P$---and thus $R$---is similar to constructions found in \cite{MGClark2010_1,MGClark2007,Luescher2007, MGClark2010_2} in the sense that the structure of all these interpolation operators is similar while the test vectors $v_i$ upon which the interpolation is built---and therefore the action of the operators---are different.

Due to Lemma~\ref{lem:g5sym} it is natural to choose
$$
  R = (\Gamma_5 P)^H
$$
in the aggregation based intergrid operators: if $P$ is built from vectors $v_1, \ldots, v_N$ which approximate right eigenvectors with small eigenvalues of $D$, then $R = (\Gamma_5 P)^H$ is built from vectors $\hat{v}_i = \Gamma_5 v_i$ which approximate left eigenvectors with small eigenvalues.

As was pointed out in \cite{MGClark2010_1}, it is furthermore possible to even obtain $R = P^{H}$ by taking the special spin-structure of the Dirac operator into account when defining the aggregates. To be specific, we introduce the following definition.
\begin{definition}
The aggregation $\{\mathcal{A}_i, \; i=1,\ldots,s\}$ is termed {\em $\Gamma_5$-compatible} if any given aggregate $\mathcal{A}_i$ is composed exclusively of fine variables with spin 0 and 1 or of fine variables with spin 2 and 3.
\end{definition}

Assume that we have a $\Gamma_5$-compatible aggregation and consider the interpolation operator $P$ from \eqref{eq:aggInterpolation}. Since $\Gamma_5$ acts as the identity on spins 0 and 1 and as the negative identity on spins 2 and 3, when going from $P$ to $\Gamma_5P$ each of the non-zero blocks in $P$ belonging to a specific aggregate is either multiplied by $+1$ or by $-1$. This gives
\begin{equation} \label{commute_with_g5}
  \Gamma_5 P = P \Gamma_5^c.
\end{equation}
with $\Gamma_5^c$ acting as the identity on the spin-0-1-aggregates and as the negative identity on the spin-2-3-aggregates.
\begin{lemma}~\label{lem:cgprops} Let the aggregation be $\Gamma_5$-compatible and $P$ the corresponding aggregation based prolongation as in \eqref{eq:aggInterpolation} and $R = (\Gamma_5P)^H$. Consider the two coarse grid operators
\[
   D_c^{PG} = RDP, \quad \mbox{ and } \quad D_c = P^HDP.
\] 
Then 
\begin{itemize}
  \item[(i)] $D_c = \Gamma_5^c D_c^{PG}$. 
  \item[(ii)] $I-PD_c^{-1}P^HD = I - P(D_c^{PG})^{-1}RD$.
  \item[(iii)] $D_c^{PG}$ is hermitian, $D_c$ is $\Gamma_5^c$-symmetric.
  \item[(iv)] For the field of values $\mathcal{F}(D)=\{\psi^{H}D\psi : \psi^{H}\psi = 1\}$, we have $\mathcal{F}(D_c) \subseteq \mathcal{F}(D)$.
\end{itemize}
\end{lemma}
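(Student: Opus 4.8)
The plan is to derive all four items from the single commutation relation \eqref{commute_with_g5}, namely $\Gamma_5 P = P\Gamma_5^c$, together with $P^HP=I$, the fact that $\Gamma_5$ and $\Gamma_5^c$ are hermitian involutions ($\Gamma_5^2 = I = (\Gamma_5^c)^2$), and the $\Gamma_5$-symmetry $D^H = \Gamma_5 D\Gamma_5$ that follows from \eqref{eq:g5sym}. As a preliminary step I would take the hermitian conjugate of \eqref{commute_with_g5} to obtain $P^H\Gamma_5 = \Gamma_5^c P^H$, and hence rewrite the restriction as $R = (\Gamma_5 P)^H = P^H\Gamma_5 = \Gamma_5^c P^H$. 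In particular $D_c^{PG} = RDP = P^H\Gamma_5 DP$ while $D_c = P^HDP$.

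For (i) I would multiply $D_c^{PG} = P^H\Gamma_5 DP$ on the left by $\Gamma_5^c$ and push $\Gamma_5^c$ through $P^H$ using $\Gamma_5^c P^H = P^H\Gamma_5$, which gives $\Gamma_5^c D_c^{PG} = P^H\Gamma_5^2 DP = P^HDP = D_c$. Item (ii) then follows by noting that (i) makes $D_c$ and $D_c^{PG}$ simultaneously invertible with $D_c^{-1} = (D_c^{PG})^{-1}\Gamma_5^c$; substituting this into $PD_c^{-1}P^HD$ and using $\Gamma_5^c P^H = P^H\Gamma_5 = R$ turns it into $P(D_c^{PG})^{-1}RD$, which is the claimed identity of error propagators. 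For (iii), $(D_c^{PG})^H = P^H(\Gamma_5 D)^HP = P^H\Gamma_5 DP = D_c^{PG}$ by \eqref{eq:g5sym}, so $D_c^{PG}$ is hermitian; and $\Gamma_5^c D_c = \Gamma_5^c(\Gamma_5^c D_c^{PG}) = D_c^{PG}$ by (i), so $\Gamma_5^c D_c$ is hermitian as well, which is exactly the assertion that $D_c$ is $\Gamma_5^c$-symmetric. Finally, for (iv), given any $x\in\mathbb{C}^{n_c}$ with $x^Hx = 1$, the vector $y = Px$ satisfies $y^Hy = x^HP^HPx = 1$ because $P^HP = I$, and $x^HD_cx = x^HP^HDPx = y^HDy \in \mathcal{F}(D)$; since $x$ is arbitrary this yields $\mathcal{F}(D_c)\subseteq\mathcal{F}(D)$.

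There is no genuine obstacle here: once \eqref{commute_with_g5} and its conjugate $P^H\Gamma_5 = \Gamma_5^c P^H$ are available, each item is a short algebraic substitution. The only points requiring a little care are bookkeeping ones — keeping track of which side of $P$ (respectively $P^H$) the sign matrix $\Gamma_5^c$ ends up on when commuting it past the intergrid operators, and observing in (ii) that the identity $D_c = \Gamma_5^c D_c^{PG}$ makes invertibility of $D_c$ equivalent to that of $D_c^{PG}$, so that both error propagators are well defined under the same hypothesis.
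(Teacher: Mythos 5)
Your proposal is correct and follows essentially the same route as the paper's proof: all four parts are derived from the commutation relation $\Gamma_5 P = P\Gamma_5^c$ (equivalently $P^H\Gamma_5 = \Gamma_5^c P^H$), the involution property of $\Gamma_5^c$, the $\Gamma_5$-symmetry \eqref{eq:g5sym}, and $P^HP=I$ for the field-of-values inclusion. The only differences are cosmetic (e.g., in (ii) you invert $D_c = \Gamma_5^c D_c^{PG}$ while the paper inverts $D_c^{PG} = \Gamma_5^c D_c$), so no further comment is needed.
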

\begin{proof} We first observe that just as $\Gamma_5$ the matrix $\Gamma_5^c$ is diagonal with diagonal entries $+1$ or $-1$, so $\Gamma_5^c = (\Gamma_5^c)^H = (\Gamma_5^c)^{-1}$. Part \textit{(i)} now follows from 
\[
  D_c^{PG} = RDP = (\Gamma_5P)^HDP = (P \Gamma_5^c)^HDP = \Gamma_5^c P^HDP = \Gamma_5^cD_c.
\]
Consequently, 
\[
  P(D^{PG}_c)^{-1}RD = P D_c^{-1} \Gamma_5^c P^H\Gamma_5 D = P D_c^{-1} \Gamma_5^c \Gamma_5^c P^H D  =  P D_c^{-1} P^H D,
\]
which gives \textit{(ii)}. For part \textit{(iii)} we observe that 
\[
(D_c^{PG})^H = P^HD^HR^H = P^HD^H\Gamma_5P = P^H\Gamma_5 D P = RDP = D_c^{PG}.
\]
This shows that $D_c^{PG}$ is hermitian, which is equivalent to $D_c = \Gamma_5^cD_c^{PG}$ being $\Gamma_5^c$-symmetric. Finally, since $P^HP = I$, we have 
\begin{eqnarray*}
  \mathcal{F}(D_c) \, = \, \{ \psi_c^HD_c\psi_c: \psi_c^H\psi_c = 1 \} 
                     &=& \{ (P\psi_c)^HD(P\psi_c): (P\psi_c)^H(P\psi_c) = 1 \} \\
  &\subseteq& 
  \{ \psi^HD\psi: \psi^H\psi = 1 \} = \mathcal{F}(D),
\end{eqnarray*}
which gives \textit{(iv)}.
\end{proof}

Lemma~\ref{lem:cgprops} has some remarkable consequences. Part \textit{(ii)} shows that we end up with the same coarse grid correction, irrespectively of whether we pursue a Petrov-Galerkin approach (matrix $D_c^{PG}$ with $R = \Gamma_5P$) or a Galerkin approach (matrix $D_c$, restriction is the adjoint of the prolongation). The Petrov-Galerkin matrix $D_c^{PG}$ inherits the hermiticity of the matrix $\Gamma_5D$, whereas the Galerkin matrix $D_c$ inherits the $\Gamma_5$-symmetry (and thus the symmetry of the spectrum, see Lemma~\ref{lem:g5sym}) of $D$. Moreover, if $\mathcal{F}(D)$ lies in the right half plane, then so does $\mathcal{F}(D_c)$ and thus the spectrum of $D_c$. It is known that the ``symmetrized'' Wilson Dirac operator $\Gamma_5D$ is close to maximally indefinite \cite{Gohberg_etal_2005}, i.e., the number of negative eigenvalues is about the same as the positive ones. This property is also inherited by $\Gamma_5^cD_c= D_c^{PG}$. 

$\Gamma_5$-symmetry implies an interesting connection between the eigensystem of $\Gamma_5D$ and the singular values and vectors of $D$. Indeed, if
\[
  \Gamma_5D = V \Lambda V^H, \enspace \Lambda \mbox{ diagonal}, \enspace V^HV = I 
\]
denotes the eigendecomposition of the hermitian matrix $\Gamma_5D$, then
\begin{equation} \label{eq:evd-svd}
  D = (\Gamma_5V \sign(\Lambda)) \ |\Lambda| \ V^H = U \Sigma V^H
\end{equation}
is the singular value decomposition of $D$ with the unitary matrix $U = \Gamma_5 V \sign(\Lambda)$ and $\Sigma = |\Lambda|$. 

The theory of algebraic multigrid methods for non-hermitian problems recently developed in~\cite{Sanders10} suggests to base interpolation and restriction on the right and left singular vectors corresponding to small singular values rather than on eigenvectors, so we could in principle use the relation~\eqref{eq:evd-svd}. However, obtaining good approximations for the singular vectors belonging to small singular values is now much harder than obtaining good approximations to eigenvectors belonging to small eigenvalues, since the small singular values lie right in the middle of the spectrum of $\Gamma_5D$, whereas the small eigenvalues of $D$ lie at the ``border'' of its spectrum (and in the right half plane $\mathbb{C}^+$ if $\mathcal{F}(D) \subset \mathbb{C}^+).$ Numerically we did not find that going after the singular values payed off with respect to the solver performance and it significantly increased the setup timing. These observations led us to the eigenvector based adaptive multigrid approach presented here; it also motivates that we consider $D_c$ rather than $D_c^{PG}$ as the ``correct'' coarse grid system to work with recursively in a true multigrid method.

In our computations, we take special $\Gamma_5$-compatible, lattice-block based aggregations.  

\begin{definition} \label{def:standard_aggregation} Let $\mathcal{L}_j,j=1,\ldots,s_L$ be a block decomposition of the lattice $\mathcal{L}$. Then the {\em standard aggregation} $\{\mathcal{A}_{j,\sigma}, j=1,\ldots,s_L, \sigma = 0,1\}$ is given by
\[
  \mathcal{A}_{j,0} = \mathcal{L}_j \times \{0,1\} \times \mathcal{C}, \enspace \mathcal{A}_{j,1} = \mathcal{L}_j \times \{2,3\} \times \mathcal{C}.
\]  
\end{definition}

Aggregates of the standard aggregation always combine two spin degrees of freedom in a $\Gamma_5$-compatible manner and all three color degrees of freedom. For any given $j$, the two aggregates $\mathcal{A}_{j,0}$ and $\mathcal{A}_{j,1}$ are the two only aggregates associated with the lattice-block $\mathcal{L}_j$. The standard aggregates thus induce a coarse lattice $\mathcal{L}_c$ with $n_{\mathcal{L}_c}$ sites where each coarse lattice site corresponds to one lattice-block $\mathcal{L}_j$ and holds $2N$ variables with $N$ the number of test vectors. $N$ variables correspond to spin 0 and 1 (and aggregate $\mathcal{A}_{j,0}$); another $N$ variables to spin 2 and 3 (and aggregate $\mathcal{A}_{j,1}$). Thus the overall system size of the coarse system is $n_c = 2Nn_{\mathcal{L}_c}$.
  
With standard aggregation, in addition to the properties listed in Lemma~\ref{lem:cgprops}, the coarse system $D_c = P^HDP$ also preserves the property that coarse lattice points can be arranged as a 4D periodic lattice such that the system represents a nearest neighbor coupling on this torus. Each coarse lattice point now carries $2N$ variables.  

We also note that applying $R$ and $P$ to a vector does not require any communication in a parallel implementation if whole aggregates are assigned to one process.

% -----------------------------------------------------------------------------
\subsection{Adaptivity in Aggregation-based AMG} \label{sec:aAggAMG}
% -----------------------------------------------------------------------------
If no {\em a priori} information about the near kernel is available, the test vectors $v_1,\ldots,v_N$ to be used in an aggregation based multigrid method have to be obtained computationally during a {\em setup phase}. We now briefly review the setup concept of adaptive (smoothed) aggregation as described in~\cite{Brezina2005}. We do so in the Galerkin context, i.e., we take $R = P^H$. The first fundamental idea of adaptivity in algebraic multigrid methods is to use the smoother to find error components not effectively reduced by the smoother, i.e., belonging to the near kernel. Starting with an initial random vector $u$, some iterations with the smoothing scheme on the homogeneous equations $Du = 0$ yield a vector $\tilde{v}$ rich in components that are not effectively reduced. The first set of test vectors then is the singleton $\{v\}$, and one constructs the corresponding aggregation based interpolation $P$ from \eqref{eq:aggInterpolation}. This construction guarantees that $v$ is in $\range(P)$ and thus is treated on the coarse grid. Once a first two- or multigrid method is constructed in this way, one can use it to generate an additional vector not effectively reduced by the current method by again iterating on the homogeneous system. This newly found vector is added to the set of test vectors upon which we build new interpolation and coarse grid operators. Continuing in this manner we ultimately end up with a multigrid method which converges rapidly, but possibly at a high computational cost for the setup if many vectors need to be generated and incorporated in the interpolation operator. To remedy this issue, already in~\cite{Brezina2005}, some sophisticated ideas to filter the best information out of the produced vectors, are proposed which have been partly used in the implementations of adaptive algebraic multigrid for QCD described in~\cite{MGClark2010_1,MGClark2007,MGClark2010_2}.

%The use of the homogeneous equation $Du = 0$ as the anvil to shape useful information by working on it with the most advanced method at hand, is the core idea of early adaptivity in algebraic multigrid methods.

% -----------------------------------------------------------------------------
\subsection{Adaptivity in Bootstrap AMG} \label{sec:Boot}
% -----------------------------------------------------------------------------
It is possible to use the current multigrid method in an adaptive setup in more ways than just to test it for deficiencies by applying it to the homogeneous equation $Du = 0$. This is done in the {\em bootstrap} approach pursued in \cite{brandt2002,KahlBootstrap} which we sketch now. Details will be discussed in connection with the inexact deflation method in sections~\ref{sec:SetupID} and \ref{sec:DDML}.

The following observation is crucial: Given an eigenpair $(v_{c},\lambda_{c})$ of the generalized eigenvalue problem on the coarse grid
\begin{equation*}
  D_{c}v_{c} = \lambda_{c} P^H Pv_{c},
\end{equation*} we observe that $(Pv_c,\lambda_c)$ solves the 
constrained eigenvalue problem 
\begin{equation*}
  \text{find\ } (v,\lambda) \text{\ with\ } v \in \mbox{range}(P) \text{\ s.t.\ } P^H \left(Dv - \lambda v\right) = 0
\end{equation*} on the fine grid. 
This observation allows to use the coarse grid system as a source of information about the eigenvectors with small eigenvalues of the fine grid system. Computing eigenvectors with small $\lambda_{c}$ on the coarse grid is cheaper than on the fine grid, and applying a few iterations of the smoother to the lifted vectors $Pv_{c}$ yields useful test vectors rich in components belonging to the near kernel of the fine grid system. As we will see, the setup process used in the ``inexact deflation'' approach from~\cite{Luescher2007}, explained in the next section, can also be interpreted as a bootstrap-type setup, where instead of using an exact solution to the coarse grid eigenproblem only approximations are calculated. 

% -----------------------------------------------------------------------------
\section{Multigrid and Inexact Deflation} \label{sec:PID}
% -----------------------------------------------------------------------------
A hierarchical approach for solving the Wilson Dirac equation~\eqref{eq:discreteDirac}, which lately received attention in the lattice QCD community, was proposed in~\cite{Luescher2007}. It is a combination of what is called ``inexact deflation'' with an SAP preconditioned generalized conjugate residuals (GCR) method. The paper \cite{Luescher2007} does not relate its approach to the existing multigrid literature. The purpose of this section is to recast the formulations from \cite{Luescher2007} into established terminology from algebraic multigrid theory and to explain the limitations of the overall method from \cite{Luescher2007} which  composes its multigrid ingredients in a non-optimal manner. We also explain how the setup employed in \cite{Luescher2007} to construct the ``inexact deflation subspace'' (i.e., the test vectors) can be viewed and used as an approximate bootstrap setup in the sense of section~\ref{sec:Boot}.

% -----------------------------------------------------------------------------
\subsection{Inexact Deflation} \label{sec:ID}
% -----------------------------------------------------------------------------
The inexact deflation subspace constructed in~\cite{Luescher2007} is the range of a linear operator $P$ which resembles the definition of aggregation based interpolation from~\eqref{eq:aggInterpolation}. As in the aggregation-based construction it uses a set of test vectors $v_1, \ldots, v_{N}$ which are ``chopped'' up over aggregates (called subdomains in \cite{Luescher2007}) to obtain the locally supported columns of $P$. These aggregates are not $\Gamma_5$-compatible, so the $\Gamma_5$-symmetry is not preserved on the coarse grid operator $D_c$ which is obtained as $D_c = P^HDP$. Since the inexact deflation approach is not meant to be recursively extended to a true multilevel method, preserving important properties of the fine system on the coarse system is of lesser concern. However, within its two-level framework a (purely algebraic) deflating technique is applied when solving the coarse system.

Two projections $\pi_{L}$, $\pi_{R}$ are 
defined in \cite{Luescher2007} as follows
\begin{equation}\label{eq:pilr}
  \pi_{L} = I - DPD_{c}^{-1}P^{H}\quad \text{and}\quad \pi_{R} = I - PD_{c}^{-1}P^{H}D.
\end{equation} 
Clearly, $\pi_{R}$ is the coarse grid correction introduced in section~\ref{section:AMG}; cf.\ Lemma~\ref{lem:cgprops}(i). In the context of inexact deflation these projections and the relation $D\pi_{R} = \pi_{L}D$ are used to decompose the linear system of equations $Dz = b$ as
\begin{equation*}
  D\pi_{R} z = \pi_{L} b, \quad D(I - \pi_{R}) z = (I - \pi_{L}) b .
\end{equation*}
The second equation can be simplified to $(I-\pi_{R})z = PD_{c}^{-1}P^{H}b$. Thus the solution $z$ can be computed as $z = \pi_Rz + (I-\pi_R)z = \chi + \chi^{\prime},$ where
\begin{equation*}
  \chi^{\prime} = PD_{c}^{-1}P^{H}b
\end{equation*}
only requires the solution of the coarse grid system $D_{c}$ and 
\begin{equation*}
  D \chi = D \pi_R \chi = \pi_{L} b
\end{equation*}
is the ``inexactly deflated'' system which in~\cite{Luescher2007} is solved by a right preconditioned Krylov subspace method. To be specific, the Krylov subspace is built for the operator
\begin{equation*}
  D \pi_{R} \MSAP{\nu}
\end{equation*}
and the right hand side $\pi_{L}b$, and the Krylov subspace method is GCR (general conjugate residuals, cf.\ \cite{Saad:2003:IMS:829576}), a minimum residual approach which automatically adapts itself to the fact that the preconditioner $\MSAP{\nu}$ is not stationary, see the discussion in section~\ref{dd_in_lattice_qcd}.

% -----------------------------------------------------------------------------
\subsection{Comparison of Multigrid and Inexact Deflation} \label{sec:MGvsID}
% -----------------------------------------------------------------------------
Although the ingredients of an aggregation based algebraic multigrid method as described in section~\ref{section:AMG} and of ``inexact deflation'' as described in the previous paragraph are the same, their composition makes the difference. In the multigrid context we combine the SAP smoothing iteration with the coarse grid correction such that it gives rise to the error propagator of a V-cycle with $\nu$ post smoothing steps
\begin{equation*}
  E = (I - \MSAP{\nu} D)( I - P D_c^{-1} P^H D )\,.
\end{equation*} 
Hence we obtain for one iteration of the V-cycle
\begin{equation*}
  z \leftarrow z+C^{(\nu)}r 
\end{equation*}
where $z$ denotes the current iterate and $r$ the current residual $b-Dz$, and 
\begin{equation} \label{eq:iterationmatrix}
  C^{(\nu)} \, = \, \MSAP{\nu} + P D_c^{-1} P^{H} - \MSAP{\nu} D P D_c^{-1} P^{H} \, = \, \MSAP{\nu} \pi_L + P D_c^{-1} P^{H}\,,
\end{equation}
with the last equality following from the definition of the projectors in \eqref{eq:pilr}.
%===> cut
% this can be written as
%\begin{eqnarray*}
%  C^{(\nu)} = \MSAP{\nu} \pi_L + P D_c^{-1} P^{H}\,.
%\end{eqnarray*}
Using the multigrid method as a right preconditioner in the context of a Krylov subspace method, the preconditioner is given by $C^{(\nu)}$, and the subspace is built for $DC^{(\nu)}.$ We again should use a flexible Krylov subspace method such as flexible GMRES or GCR, since the smoother $\MSAPone$ is non-stationary and, moreover, we will solve the coarse system $D_c$ only with low accuracy using some ``inner iteration'' in every step. The important point is that a rough approximation of the coarse grid correction in \eqref{eq:iterationmatrix}, i.e., the solution of systems with the matrix $D_c$ at only low accuracy, will typically have only a negligible effect on the quality of the preconditioner, and it will certainly not hamper the convergence of the iterates towards the solution of the system since multiplications with the matrix $D$ are done exactly. On the other hand, in the ``inexact deflation'' context the exact splitting of the solution $z = \chi' + \chi$ with
\[
  \chi^{\prime} = PD_{c}^{-1}P^{H}b, \quad D \pi_R \chi = \pi_{L} b
\]
requires the same final accuracy for both $\chi'$ and $\chi$. Therefore, when computing $\chi'$, the coarse grid system  has to be solved with high accuracy. More importantly, $D_c^{-1}$ also appears in $\pi_R$ which is part of the ``deflated'' matrix $D\pi_R$ in the system for $\chi$. In the inexact deflation context, this system is solved using SAP as a preconditioner. While we can allow for a flexible and possibly inexact evaluation of the preconditioner, the accuracy with which we evaluate the non-preconditioned matrix $D \pi_R$ in every step will inevitably affect the accuracy attainable for $\chi$. As a consequence, in each iteration we have to solve the system with the matrix $D_c$ arising in $\pi_R$ with an accuracy comparable with the accuracy at which we want to obtain $\chi$ (although the accuracy requirements could, in principle, be somewhat relaxed as the iteration proceeds due to results from \cite{SiSz03,EshofSlei04}).

The difference of the two approaches is now apparent. In the multigrid context we are allowed to solve the coarse system with low accuracy, in inexact deflation we are not. Since the coarse grid system is still a large system, the work to solve it accurately will by far dominate the computational cost in each iteration in inexact deflation. In the multigrid context we can solve at only low accuracy without noticeably affecting the quality of the preconditioner, thus substantially reducing the computational cost of each iteration. Moreover, such a low accuracy solution can be obtained even more efficiently by a recursive application of the two-grid approach, resulting in a true multigrid method.
For a more detailed analysis of the connection between deflation methods (including inexact deflation) and multigrid approaches we refer to~\cite{Kahl-Rittich-preprint,Rittich2011,Tang:2010:CTP:1958286.1958296}.

% -----------------------------------------------------------------------------
\subsection{Adaptivity in the Setup of Inexact Deflation} \label{sec:SetupID}
% -----------------------------------------------------------------------------
To set up the inexact deflation method we need a way to obtain test vectors to construct the inexact deflation operators. Once these vectors are found the method is completely defined (see section~\ref{sec:ID}). In analogy to the discussion of adaptive algebraic multigrid 
%====> cut
%methods 
%====<
in sections~\ref{sec:aAggAMG} and~\ref{sec:Boot}, these test vectors should contain information about the eigenvectors belonging to small eigenvalues of the operator $D \MSAP{\nu}$, the preconditioned system.

Though the setup proposed in~\cite{Luescher2007} is similar 
in nature 
to the one described in section~\ref{sec:aAggAMG}, it differs in one important way.  
Instead of working on the homogeneous equation $D\psi = 0$ with a random initial 
guess to obtain the test vectors, it starts with a set of random test vectors 
$\psi_j$ and approximately computes $D^{-1}\psi_j$ using SAP. The (approximate) 
multiplication with $D^{-1}$ will amplify the components of $\psi$ belonging to 
the near kernel. These new vectors are now used to define $P$ (and $D_c$), 
yielding an inexact deflation method which can again be used to approximately 
compute $D^{-1}\psi_j$ giving new vectors for $P$. The  
whole process is repeated 
several times; see Algorithm~\ref{alg:IDsetup} for a detailed description where 
a total of $\ninv$ of these cycles is performed.
\begin{algorithm}[t]
  \caption{Inexact deflation setup -- IDsetup($n_{\mathit{inv}}$,$\nu$) as used 
in \cite{Luescher2007}}\label{alg:IDsetup}
    Let $v_1,\ldots,v_{N} \in \mathbb{C}^n$  be random test vectors \;
    \For{$\eta=1$ to $3$} {
    \For{$j=1$ to $N$} {
        $v_{j} \leftarrow \MSAP{\eta} v_{j}$ 
    }
    }
    \For{$\eta=1$ to $n_{\mathit{inv}}$} { 
         (re-)construct $P$ and $D_c$ from current $v_1,\ldots,v_N$ \;
    \For{$j=1$ to $N$}{
        $v_j \leftarrow ( \MSAP{\nu} \pi_{L} + PD_c^{-1}P^{H} ) v_j$ 
\label{line:updateID} \;
     $v_j \leftarrow \frac{v_j}{||v_j||} $ \;
    }
    }
\end{algorithm}
The up\-date $v_j \leftarrow ( \MSAP{\nu} \pi_{L} + PD_c^{-1}P^{H} ) v_j$ in line~\ref{line:updateID} of the algorithm is equivalent to the application of the V-cycle iteration matrix $C^{(\nu)}$ (cf.~\eqref{eq:iterationmatrix}). It can be interpreted as one step of an iteration to solve $Dv = v_j$ with initial guess $0$ and iteration matrix $C^{(\nu)}$.

This update of the test vectors can also be viewed in terms of the bootstrap AMG setup outlined in section~\ref{sec:Boot}. While the first part of the update, $ \MSAP{\nu} \pi_{L} v_{j}$, is the application of a coarse grid correction followed by smoothing, i.e., a test to gauge the effectiveness of the method (cf.~section~\ref{sec:aAggAMG}), the second part of the update, $PD_c^{-1}P^{H} v_j$ is in $\range(P)$. In contrast to the bootstrap methodology where an update in $\range(P)$ is obtained by interpolating eigenvectors with small eigenvalues of $D_c$, in the inexact deflation variant these ``optimal'' vectors are only approximated.

% -----------------------------------------------------------------------------
\section{DD-$\alpha$AMG} \label{sec:DDML}
% -----------------------------------------------------------------------------
We now have all the ingredients available to describe our domain decomposition/aggregation based adaptive algebraic multigrid (DD-$\alpha$AMG) method for the Wilson Dirac operator~\eqref{eq:discreteDirac}. 

As its smoother we take $\MSAP{\nu}$, i.e., we perform $\nu$ iterations of red-black Schwarz as formulated in \eqref{SAP_1step:eq}. %Algorithm~\ref{alg:Schwarz}. 
Like $\nu$, the underlying block decomposition of the lattice $\mathcal{L}$ is a parameter to the method which we will specify in the experiments.  

The coarse system $D_c$ is obtained as $D_c = P^HDP$, where $P$ is an aggregation based prolongation obtained during the adaptive setup phase. The aggregates are from a standard aggregation according to Definition~\ref{def:standard_aggregation}, implying that it is in particular lattice-block based and $\Gamma_5$-compatible. Parameters of the aggregation are the underlying block decomposition of $\mathcal{L}$ (which does not necessarily match the one underlying the SAP smoother) and the test vectors $v_1,\ldots,v_N$ upon which $P$ is built. The coarse grid matrix $D_c$ inherits all of the important properties of $D$, cf.\ Lemma~\ref{lem:cgprops}.

We combine the smoothing iteration and the coarse grid correction into a standard $V$-cycle with no pre- and $\nu$ steps of post smoothing so that the iteration matrix of one $V$-cycle is given by $C^{(\nu)}$ from \eqref{eq:iterationmatrix}. Instead of using iterations with the $V$-cycle as a stand-alone solver, we run FGMRES, the flexible GMRES method (cf.\ \cite{Saad:2003:IMS:829576}) with one $V$-cycle used as a (right) preconditioner. 

It remains to specify how we perform the adaptive setup yielding the test vectors $v_1,\ldots,v_N$. Extensive testing showed that a modification of the inexact deflation setup (Algorithm~\ref{alg:IDsetup}) is the most efficient. The modification is a change in the update of the vectors $v_j$ in the second half. Instead of doing one iteration with $C^{(\nu)}$ and initial guess $0$ to approximately solve $Dv=v_j$, we use the currently available vector $v_j$ as our initial guess, see Algorithm~\ref{alg:DDMLsetup}.
\begin{algorithm}[ht]
\SetKwComment{mycomment}{\hfill\{}{\}}
\SetNoFillComment
\caption{DD-$\alpha$AMG-setup($\ninv,\nu$)}\label{alg:vcycle_inv_iter}
\label{alg:DDMLsetup}
    perform Algorithm~\ref{alg:IDsetup} with line~\ref{line:updateID} 
replaced by \; 
    $v_j \leftarrow v_j + C^{(\nu)} (v_j - D v_j) $  \qquad 
\mycomment{$=C^{(\nu)} v_j + (I-C^{(\nu)}D)v_j$} 
\end{algorithm}
%
% -----------------------------------------------------------------------------
\section{Numerical Results}\label{sec:NR}
% -----------------------------------------------------------------------------
We implemented the DD-$\alpha$AMG method in the programming language \texttt{C} using the parallelization interface of \texttt{MPI}. The numerical tests focus mainly on a two-grid version of our code. As most of the work is spent on the coarse grid, a recursive extension of the two-grid method to more levels is attractive and we thus also show some results of a preliminary version of a true multigrid DD-$\alpha$AMG implementation.

Our code is optimized to a certain extent, but certainly not to the extreme. As is customary in lattice QCD computations, we use a mixed precision approach where we perform the $V$-cycle of the preconditioner in single precision. Low level optimization (e.g., making use of the SSE-registers on Intel/AMD architectures) has not been considered, yet. All Krylov subspace methods (FGMRES, BiCGStab, GCR, CG) have been implemented in a common framework with the same degree of optimization to allow  for a standardized comparison of computing times. This is particularly relevant when we compare timings with BiCGStab as well as with the multigrid method introduced in \cite{MGClark2010_1,MGClark2007,MGClark2010_2}. We also include a comparison with the inexact deflation approach, where an efficient implementation is publicly available.

A commonly used technique in lattice QCD computations is odd-even preconditioning. A lattice site $x$ is called even if $x_1+x_2+x_3+x_4$ is even, else it is called odd. Due to the nearest neighbor coupling, the Wilson Dirac operator has the form
\[
  D = \left( \begin{matrix} D_{ee} & D_{eo} \\ D_{oe} & D_{oo} \end{matrix} \right),
\]
if we order all even sites first. Herein, $D_{ee}$ and $D_{oo}$ are block diagonal with $12 \times 12$ diagonal blocks. Instead of solving a system with $D$ we can solve the corresponding system for the odd lattice sites given by the Schur complement $D_S = D_{oo}-D_{oe}D_{ee}^{-1}D_{eo}$ and then retrieve the solution at the even lattice sites, cf.\ \cite{MGClark2010_2}. The inverse $D_{ee}^{-1}$ is pre-computed once for all, and the operator $D_S$ is applied in factorized form. A matrix-vector multiplication with $D_S$ thus requires the same work as one with $D$ while the condition of $D_S$ improves over that of $D$. Typically, this results in a gain of $2-3$ in the number of iterations and execution time. Within BiCGStab we use odd-even preconditioning for $D$. In all multigrid approaches we use odd-even preconditioned restarted GMRES with a restart length of 30 when we solve the coarse system involving $D_c$. We implemented all odd-even preconditioned operators similarly in spirit to what was proposed for the Wilson Dirac operator in \cite{Krieg:2010zz}.

\begin{table}[ht]
\centering\scalebox{0.9}{\begin{tabular}{llcc}
\toprule
         & parameter                                                     &                & default         \\
\midrule 
   setup & number of iterations                                          & $\ninv$        & $6$             \\ 
         & number of test vectors                                        & $N$            & $20$            \\
         & size of lattice-blocks for aggregates                         &                & $4^4$           \\
         & coarse system relative residual tolerance                     &                & $5\cdot10^{-2}$ \\
         & (stopping criterion for the coarse system)$^{(*)}$            &                &                 \\
\midrule
  solver & restart length of FGMRES                                      & $n_{kv}$       & $25$            \\
         & relative residual tolerance  (stopping criterion)             & $\mathit{tol}$ & $10^{-10}$      \\ 
\midrule
smoother & number of post smoothing steps$^{(*)}$                        & $\nu$          & $2$             \\
         & size of lattice-blocks in SAP$^{(*)}$                         &                & $4^4$           \\
         & number of minimal residual (MR) iterations to                 &                &                 \\
         & solve the local systems \eqref{eq:localsystem} in SAP$^{(*)}$ &                & $4$             \\
\bottomrule
  \end{tabular}}
  \caption{Parameters for the DD-$\alpha$AMG two-level method.
   $(*):$ same in solver and setup}
  \label{table:allparms}
\end{table}

Table~\ref{table:allparms} summarizes the default parameters used for DD-$\alpha$AMG in our experiments. Besides those discussed in section~\ref{sec:DDML}, the table also gives the stopping criterion used for the solves with the coarse system $D_c$ (the initial residual is to be decreased by a factor of 20) and the stopping criterion for the entire FGMRES iteration (residual to be decreased by a factor of $10^{10}$). In each SAP iteration we have to (approximately) solve the local systems \eqref{eq:localsystem}. Instead of requiring a certain decrease in the residual we here fix the number of iterative steps (to $4$). The iterative method we use here is the odd-even preconditioned minimal residual method MR, i.e., restarted GMRES with a restart length of 1, where each iterative step is particularly cheap.

For the various configurations and respective matrices we found that this default set of parameters yields a well performing solver, with only little room for further tuning. The size of the lattice-blocks ($4^4$) fits well with all lattice sizes occurring in practice, where $N_t$ and $N_s$ are multiples of 4. The number of setup iterations, $\ninv$, is the only one of these parameters which should be tuned. It will depend on how many systems we have to solve, i.e., how many right hand sides we have to treat. When $\ninv$ is increased, the setup becomes more costly, while, at the same time, the solver becomes faster. Thus the time spent in the setup has to be balanced with the number of right hand sides, and we will discuss this in some detail in section~\ref{ss:setup_eval}. The default $\ninv=6$ given in Table~\ref{table:allparms} should be regarded as a good compromise.

\begin{table}[ht]
\centering\scalebox{0.9}{\begin{tabular}{ccccccc}
\toprule
  id                                    &  lattice size       & pion mass     & CGNR       & shift  & clover                 & provided by         \\
                                        &  $N_t \times N_s^3$ & $m_\pi$ [MeV] & iterations & $m_0$  & term $c_\mathit{sw}$   &                     \\
\midrule
  \ref{BMW_48_16}\conflabel{BMW_48_16}  &  $48 \times 16^3$   & $250$   & $7,\!055$  & $-0.095300$ & $1.00000$ & BMW-c~\cite{Durr:2010aw, BMW1}    \\ 
  \ref{BMW_48_24}\conflabel{BMW_48_24}  &  $48 \times 24^3$   & $250$   & $11,\!664$ & $-0.095300$ & $1.00000$ & BMW-c~\cite{Durr:2010aw, BMW1}    \\
  \ref{BMW_48_32}\conflabel{BMW_48_32}  &  $48 \times 32^3$   & $250$   & $15,\!872$ & $-0.095300$ & $1.00000$ & BMW-c~\cite{Durr:2010aw, BMW1}    \\
  \ref{BMW_48_48}\conflabel{BMW_48_48}  &  $48 \times 48^3$   & $135$   & $53,\!932$ & $-0.099330$ & $1.00000$ & BMW-c~\cite{Durr:2010aw, BMW1}    \\
  \ref{BMW_64_64}\conflabel{BMW_64_64}  &  $64 \times 64^3$   & $135$   & $84,\!207$ & $-0.052940$ & $1.00000$ & BMW-c~\cite{Durr:2010aw, BMW1}    \\
  \ref{CLS_128_64}\conflabel{CLS_128_64}&  $128 \times 64^3$  & $270$   & $45,\!804$ & $-0.342623$ & $1.75150$ & CLS~\cite{wwwCLS,Fritzsch:2012wq} \\
\bottomrule
  \end{tabular}}
  \caption{Configurations used together with their parameters. For details about their generation we refer to the references. Pion masses rounded to steps of $5$ MeV.}
\label{table:allconfs}
\end{table}

The configurations we used are listed in Table~\ref{table:allconfs}. In 
principle the pion mass $m_\pi$ and the lattice spacing (not listed) determine 
the condition of the respective matrix, e.g., the smaller $m_\pi$, the more 
ill-conditioned the respective matrix is. The physical pion mass is 
$m_{\pi_{\mathit{phys}}} = 135$ MeV which is taken on by the configurations 
\ref{BMW_48_48} and \ref{BMW_64_64}. The conditioning of the matrices is 
indicated by the iteration count of CGNR, the CG method applied to the normal 
equations $D^HD\psi = D^Hb$ (without odd-even preconditioning), in which we 
required the norm of the residual $r = b - D\psi$ to decrease by a factor of 
$10^{10}$.

We ran DD-$\alpha$AMG on the various configurations, analyzed the behavior of the setup routine and performed different scaling tests. All results have been computed on the Juropa machine at J\"ulich Supercomputing Centre, a cluster with $2,\!208$ compute nodes, each with two Intel Xeon X5570 (Nehalem-EP) quad-core processors. Unless stated otherwise the \texttt{icc}-compiler with the optimization flags \texttt{-O3 -ipo -axSSE4.2 -m64} was used.

% -----------------------------------------------------------------------------
\subsection{Comparison with BiCGStab} \label{ss:solver_comparison}
% -----------------------------------------------------------------------------
First we compare a mixed precision\footnote{The mixed precision implementation uses double precision flexible GMRES(25) preconditioned by $50$ steps of single precision, odd-even preconditioned BiCGStab}, odd-even preconditioned implementation of BiCGStab with the DD-$\alpha$AMG method using the standard parameter set for a $64^4$ configuration at physical pion mass which represents an ill-conditioned linear system with $n=201,\!326,\!592$.

\begin{table}[ht]
\centering\scalebox{0.9}{\begin{tabular}{lcccc}
\toprule
                & BiCGStab   & DD-$\alpha$AMG & speed-up factor & coarse grid     \\
\midrule 
    setup time  &            & $22.9$s        &                 &                 \\
    solve iter  & $13,\!450$ & $21$           &                 & $3,\!716^{(*)}$ \\
    solve time  & $91.2$s    & $3.15$s        & $29.0$          & $2.43$s         \\
    total time  & $91.2$s    & $26.1$s        & $3.50$          &                 \\
\bottomrule
  \end{tabular}}
  \caption{BiCGStab vs.\ DD-$\alpha$AMG with default parameters (Table~\ref{table:allparms}) on configuration~\ref{BMW_64_64} (Table~\ref{table:allconfs}), $8,\!192$ cores, $(*):$ coarse grid iterations summed up over all iterations on the fine grid.}
  \label{table:comp_64_4}
\end{table}

The results reported in Table~\ref{table:comp_64_4} show that we obtain a speed-up factor of $3.5$ over BiCGStab with respect to the total timing. Excluding the setup time, we gain a factor of $29$. The right most column 
%====> cut
%of Table~\ref{table:comp_64_4} 
%======<
shows that in this ill-conditioned case about 77\% of the solve time of DD-$\alpha$AMG goes into computations on the coarse grid. 

% -----------------------------------------------------------------------------
\subsection{Setup Evaluation} \label{ss:setup_eval}
% -----------------------------------------------------------------------------
Lattice QCD computations are dominated by two major tasks: generating configurations within the Hybrid Monte-Carlo (HMC) algorithm \cite{Kennedy:2006ax} and evaluating these configurations, i.e., calculating observables. Both tasks require solutions of the lattice Dirac equation.

The HMC generates a sequence of stochastically independent configurations. The configuration is changed in every step, and the Wilson Dirac equation has to be solved only once per configuration. Thus HMC requires a new setup---or at least an update---for the interpolation and coarse grid operator in every step. Therefore the costs of setup/update and solve have to be well-balanced.

The calculation of observables typically requires several solves for a single configuration. Therefore one would be willing to invest more time into the setup in order to obtain a better solver.

\begin{table}[ht]
\centering\scalebox{0.9}{\begin{tabular}{ccccccc}
\toprule
number of         & average     & average   & lowest       & highest      & average   & average   \\
setup             & setup       & iteration & iteration    & iteration    & solver    & total     \\
steps $\ninv$     & timing      & count     & count        & count        & timing    & timing    \\
\midrule
        $ 1 $     &   $2.08$    &  $ 149 $  &     $144$    &     $154$    &  $ 6.42$  & $8.50$    \\
        $ 2 $     &   $3.06$    &  $ 59.5$  &     $ 58$    &     $ 61$    &  $ 3.42$  & $6.48$    \\
        $ 3 $     &   $4.69$    &  $ 34.5$  &     $ 33$    &     $ 36$    &  $ 2.37$  & $7.06$    \\
        $ 4 $     &   $7.39$    &  $ 27.2$  &     $ 27$    &     $ 28$    &  $ 1.95$  & $9.34$    \\
        $ 5 $     &   $10.8$    &  $ 24.1$  &     $ 24$    &     $ 25$    &  $ 1.82$  & $12.6$    \\
        $ 6 $     &   $14.1$    &  $ 23.0$  &     $ 23$    &     $ 23$    &  $ 1.89$  & $16.0$    \\
%        $ 7 $     &   $16.8$    &  $ 22.0$  &     $ 22$    &     $ 22$    & 
% $1.88$  & $18.7$    \\
        $ 8 $     &   $19.5$    &  $ 22.0$  &     $ 22$    &     $ 22$    &  $ 2.02$  & $21.5$    \\
%        $ 9 $     &   $21.8$    &  $ 22.0$  &     $ 22$    &     $ 22$    &  $ 
%2.15$  & $24.0$    \\
        $10 $     &   $24.3$    &  $ 22.5$  &     $ 22$    &     $ 23$    &  $ 2.31$  & $26.6$    \\
%        $11 $     &   $26.6$    &  $ 23.0$  &     $ 23$    &     $ 23$    &  $ 
%2.38$  & $29.0$    \\
\bottomrule
  \end{tabular}}
  \caption{Evaluation of DD-$\alpha$AMG-setup($\ninv,2$) cf.~Algorithm~\ref{alg:vcycle_inv_iter}, $48^4$ lattice, ill-conditioned configuration (Table~\ref{table:allconfs}: id~\ref{BMW_48_48}), $2,\!592$ cores, averaged over $20$ runs.}
  \label{table:setup_eval_dfl}
\end{table}

Table~\ref{table:setup_eval_dfl} illustrates how the ratio between setup and solve can be balanced depending on the amount of right hand sides. In this particular case $2$ steps in the setup might be the best choice if only a single solution of the system is needed (minimal time for setup + 1 solve). For many right hand sides, where the time spent in the solver dominates, $5$ steps in the setup might be the best choice. Doing up to $7$ steps can lower the iteration count of the solver even further, but the better the test vectors approximate the near kernel, the more ill-conditioned the coarse system becomes, i.e.,\ lowering the iteration count of the solver means increasing the iteration count on the coarse system.

The numbers shown have been averaged over $20$ runs, because the measurements vary due to the choice of random initial test vectors. The fourth and the fifth column of Table~\ref{table:setup_eval_dfl} show that the fluctuations in the iteration count of the solver are modest. For $\ninv \geq 4$ the fluctuations almost vanish completely.

\begin{table}[ht]
\centering\scalebox{0.9}{\begin{tabular}{ccccccc}
\toprule
                & \multicolumn{6}{c}{BiCGStab iteration counts}                         \\
\cmidrule(lr){2-7}
                & conf $1$  & conf $2$  & conf $3$  & conf $4$  & conf $5$  & conf $6$  \\
                & $7,\!950$ & $8,\!350$ & $9,\!550$ & $8,\!600$ & $8,\!100$ & $9,\!950$ \\
\midrule
                & \multicolumn{6}{c}{DD-$\alpha$AMG iteration counts}                   \\
\cmidrule(lr){2-7}
        $\ninv$ & conf $1$  & conf $2$  & conf $3$  & conf $4$  & conf $5$  & conf $6$  \\
        $ 1 $   &   $161$   &   $208$   &   $175$   &   $183$   &   $181$   &   $272$   \\
        $ 2 $   &   $62$    &   $75$    &   $67$    &   $67$    &   $64$    &   $85$    \\
        $ 3 $   &   $34$    &   $37$    &   $36$    &   $37$    &   $35$    &   $39$    \\
        $ 4 $   &   $27$    &   $28$    &   $28$    &   $29$    &   $27$    &   $29$    \\
        $ 5 $   &   $24$    &   $25$    &   $25$    &   $25$    &   $24$    &   $25$    \\
        $ 6 $   &   $23$    &   $23$    &   $23$    &   $24$    &   $23$    &   $23$    \\
\bottomrule
  \end{tabular}}
  \caption{Configuration dependence study of BiCGStab and DD-$\alpha$AMG with DD-$\alpha$AMG-setup($\ninv,2$) for 6 different, ill-conditioned configurations on $48^4$ lattices, (Table~\ref{table:allconfs}: id~\ref{BMW_48_48}), $2,\!592$ cores.}
  \label{table:gauge_dependence}
\end{table}

Table~\ref{table:gauge_dependence} gives the iteration count of BiCGStab and DD-$\alpha$AMG for a set of $6$ stochastically independent configurations from a single HMC simulation. The BiCGStab iteration count shows a clear dependence on the gauge fields just as DD-$\alpha$AMG for small values of $\ninv$. 
%====> cut
%However, 
%=====>
For $\ninv \geq 4$ the iteration count varies only marginally.

% -----------------------------------------------------------------------------
\subsection{Scaling Tests} \label{ss:scaling}
% -----------------------------------------------------------------------------
We now study the scaling behavior of the solver as a function of the mass parameter and the system size. While the former determines the condition number of the Wilson Dirac operator, the latter has an effect on the density of the eigenvalues. In particular, increasing the volume leads to a higher density of small eigenvalues~\cite{Banks:1979yr}. In a weak parallel scaling test we also analyze the performance as a function of the number of processors used.

% -----------------------------------------------------------------------------
\subsubsection*{Mass Scaling} \label{ss:mass_scaling}
% -----------------------------------------------------------------------------
For this study we used a $48^4$ lattice configuration. We ran the setup once for the mass parameter $m_0 = -0.09933$ in the Wilson Dirac operator~\eqref{WilsonDirac_eq}. This represents the most ill-conditioned system where the pion mass with $135$ MeV is physical. We then used the interpolation operator obtained for this system for a variety of other mass parameters, where we then ran the DD-$\alpha$AMG solver without any further setup. 

\begin{table}[ht]
\centering\scalebox{0.9}{\begin{tabular}{ccccccc}
\toprule
              & \multicolumn{2}{c}{BiCGStab} & \multicolumn{2}{c}{DD-$\alpha$AMG} & \multicolumn{2}{c}{coarse system}      \\
\cmidrule(lr){2-3}\cmidrule(lr){4-5}\cmidrule(lr){6-7}
      $m_0$   & iteration    & solver        & iteration      & solver       & \diameter iteration & timing                \\ 
              & count        & timing        & count          & timing       & count               & (\% solve time)       \\ 
\midrule
%  $-0.03933$  &   $350$      &    $2.27$s    &   $17$         & $0.58$s      &         $9.94$      &   $0.12$s $(20.7)$    \\ 
  $-0.04933$  &   $400$      &    $2.60$s    &   $17$         & $0.59$s      &         $11.2$      &   $0.13$s $(22.0)$    \\
%  $-0.05933$  &   $450$      &    $2.97$s    &   $18$         & $0.64$s      &         $12.7$      &   $0.16$s $(25.0)$    \\
  $-0.06933$  &   $600$      &    $4.10$s    &   $19$         & $0.72$s      &         $15.4$      &   $0.20$s $(27.8)$    \\
%  $-0.07933$  &   $850$      &    $5.45$s    &   $19$         & $0.76$s      &         $19.4$      &   $0.25$s $(32.9)$    \\
  $-0.08933$  &   $1,\!550$  &    $9.82$s    &   $20$         & $0.92$s      &         $28.6$      &   $0.37$s $(40.2)$    \\
%  $-0.09033$  &   $1,\!700$  &    $10.3$s    &   $21$         & $1.00$s      &         $30.8$      &   $0.43$s $(43.0)$    \\
  $-0.09133$  &   $1,\!700$  &    $10.6$s    &   $21$         & $1.04$s      &         $33.4$      &   $0.47$s $(45.2)$    \\
%  $-0.09233$  &   $1,\!800$  &    $11.7$s    &   $21$         & $1.08$s      &         $36.4$      &   $0.51$s $(47.2)$    \\
  $-0.09333$  &   $2,\!250$  &    $13.7$s    &   $21$         & $1.13$s      &         $39.7$      &   $0.55$s $(48.7)$    \\
%  $-0.09433$  &   $2,\!650$  &    $16.7$s    &   $22$         & $1.23$s      &         $43.2$      &   $0.62$s $(50.4)$    \\
  $-0.09533$  &   $2,\!850$  &    $17.4$s    &   $22$         & $1.28$s      &         $46.9$      &   $0.68$s $(53.1)$    \\
%  $-0.09633$  &   $3,\!450$  &    $21.8$s    &   $22$         & $1.36$s      &         $51.3$      &   $0.75$s $(55.1)$    \\
  $-0.09733$  &   $3,\!750$  &    $23.7$s    &   $23$         & $1.48$s      &         $56.5$      &   $0.84$s $(56.8)$    \\
%  $-0.09833$  &   $4,\!700$  &    $28.5$s    &   $23$         & $1.63$s      &         $65.9$      &   $0.99$s $(60.7)$    \\
  $-0.09933$  &   $6,\!250$  &    $42.0$s    &   $24$         & $1.89$s      &         $79.3$      &   $1.22$s $(64.5)$    \\
\bottomrule
  \end{tabular}}
  \caption{Mass scaling of DD-$\alpha$AMG for $\ninv=5$, $48^4$ lattice (Table~\ref{table:allconfs}: id~\ref{BMW_48_48}), $2,\!592$ cores.}
  \label{table:mass_scaling}
\end{table}

In Table~\ref{table:mass_scaling} we compare BiCGStab and DD-$\alpha$AMG with respect to the timing for one right hand side and the scaling with the mass parameter $m_0$. For the smallest $m_0$, DD-$\alpha$AMG is $22.2$ times faster than BiCGStab and even for the largest value of $m_0$ there remains a factor of $3.9$. We also see that the two methods scale in a completely different manner. The BiCGStab solve for the smallest $m_0$ is $18.5$ times more expensive than the solve for the largest one. On the other hand the DD-$\alpha$AMG timings just increase by a factor of $3.2$, the iteration count even only by a factor of $1.4$. The coarse grid iteration count, however, increases by a factor of $8.0$.

% -----------------------------------------------------------------------------
\subsubsection*{System Size Scaling} \label{ss:size_scaling}
% -----------------------------------------------------------------------------
In Table~\ref{table:lattice_scaling} we report tests on the scaling with the system size for constant mass parameter and (physical) lattice spacing. We again compare DD-$\alpha$AMG with BiCGStab. The iteration count of BiCGStab for $N_t \times N_s^3$ lattices appears to scale with $N_s$ and thus almost doubles from $N_s=16$ to $N_s=32$ whereas for DD-$\alpha$AMG we observe an almost constant iteration count and time.
%====> cut
%to solution.
%====<

\begin{table}[ht]
\centering\scalebox{0.9}{\begin{tabular}{cccccc}
\toprule
                    & \multicolumn{2}{c}{BiCGStab}    & \multicolumn{3}{c}{DD-$\alpha$AMG}        \\
\cmidrule(lr){2-3}\cmidrule(lr){4-6}
 lattice size       & iteration       & solver        & setup            & iteration & solver     \\ 
 $N_t \times N_s^3$ & count           & timing        & timing           & count     & timing     \\ 
\midrule
  $48 \times 16^3$  &   $ 1,\!550$    &    $7.03$s    &   $6.59$s        & $20$      &   $0.89$s  \\ 
  $48 \times 24^3$  &   $ 2,\!150$    &    $10.7$s    &   $7.29$s        & $20$      &   $0.83$s  \\
  $48 \times 32^3$  &   $ 2,\!600$    &    $13.1$s    &   $7.15$s        & $21$      &   $0.92$s  \\
\bottomrule
  \end{tabular}}
  \caption{Lattice size scaling of DD-$\alpha$AMG, $\ninv=6$ setup iterations, lattices generated with the same mass parameter and lattice spacing (Table~\ref{table:allconfs}: id~\ref{BMW_48_16}, \ref{BMW_48_24} and \ref{BMW_48_32}), local lattice size $4 \times 8^3$.}
  \label{table:lattice_scaling}
\end{table}

% -----------------------------------------------------------------------------
\subsubsection*{Weak Scaling} \label{ss:weak_scaling}
% -----------------------------------------------------------------------------
For a weak scaling test we ran $100$ iterations of DD-$\alpha$AMG with $\ninv = 5$ in the setup on lattices ranging from size $16^4$ on a single node ($8$ cores/node) to $128^2 \cdot 64^2$ on $1,\!024$ nodes with $16 \cdot 8^3$ local lattice size on each core, cf. Figure~\ref{plot:weak_scaling}.

\begin{figure}[tb]
  \centering\scalebox{0.65}{\input{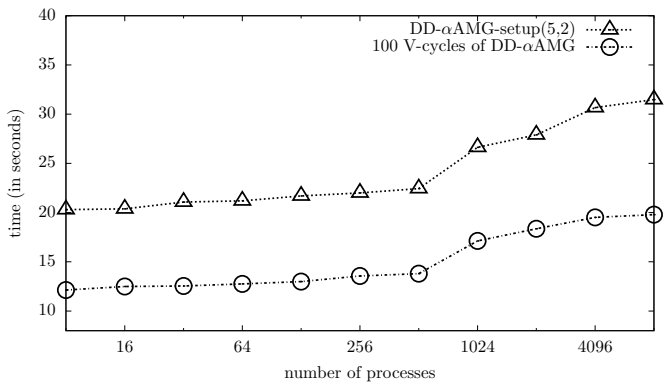}}
  \caption{Weak scaling test of DD-$\alpha$AMG.
    The lattice size is increased with the number of processes, keeping the local lattice size per process fixed to $16 \cdot 8^3$.}
  \label{plot:weak_scaling}
\end{figure}

For the scaling study we fixed the number of iterations on the coarse grid to be exactly $50$ steps of odd-even preconditioned GMRES so that we always have the same number of $100$ \texttt{MPI\_Allreduce} operations. In Figure~\ref{plot:weak_scaling} we see the usual $\mathrm{log}(p)$ dependence, $p$ the number of processes, caused by global communication,
together with an exceptional increase when going from $512$ to $1,\!024$ processes. Additional measurements show that this is due to the fact that the \texttt{MPI\_Allreduce} operations take substantially longer for $1,\!024$ processors, a machine-specific feature of Juropa. Apart from this, our method scales well up to $8,\!192$ processes. 

% -----------------------------------------------------------------------------
\subsection{Comparison with the Inexact Deflation Method} \label{ss:leuscher_comp}
% -----------------------------------------------------------------------------

The inexact deflation code of \cite{Luescher2007} is publicly available \cite{wwwDDHMC}. We now compare its performance with DD-$\alpha$AMG.\footnote{Based on the preprint~\cite{Frommer:2013fsa} of the present paper, the inexact deflation method has been upgraded in the spirit of DD-$\alpha$AMG (cf.~\cite{wwwOPENQCD}). The new version is termed ``with inaccurate projection''. We here compare with the older, ``exact projection'' version.}

We have chosen the parameters of both methods equally except for the number of post-smoothing steps $\nu$. For the inexact deflation method $\nu = 5$ and for DD-$\alpha$AMG $\nu=2$ turned out to provide the fastest solver, respectively. We used the \texttt{gcc} compiler with the \texttt{-O3} flag and hand coded low-level \texttt{SSE} optimization for the inexact deflation method and the \texttt{icc} compiler with the optimization flags \texttt{-O3 -ipo -axSSE4.2 -m64} for the DD-$\alpha$AMG method. These compiler options provide the optimal choices for the respective codes. Since our focus is on algorithmic improvements we did not work on customized \texttt{SSE} optimization for DD-$\alpha$AMG, which should, in principle, give additional speed-up. 
The following results were produced on the same $48^4$ lattice as in sections~\ref{ss:setup_eval} and \ref{ss:scaling} and on a $128 \times 64^3$ lattice (Table~\ref{table:allconfs}, id~\ref{CLS_128_64}).

\begin{table}[ht]
\centering\scalebox{0.9}{\begin{tabular}{ccccccc}
\toprule
                  & \multicolumn{3}{c}{Inexact deflation}                   & \multicolumn{3}{c}{DD-$\alpha$AMG} \\
\cmidrule(lr){2-4} \cmidrule(lr){5-7}
setup             & setup         & iteration      & solver  & setup        & iteration      & solver      \\
steps $\ninv$     & timing        & count (coarse) & timing  & timing       & count (coarse) & timing      \\
\midrule
        $ 1 $     &  $1.01$s      &  $233$ $(82) $ & $10.1$s &   $2.08$s    & $149$ $(24) $  &  $ 6.42$s   \\
        $ 2 $     &  $1.87$s      &  $155$ $(145)$ & $10.2$s &   $3.06$s    & $ 59$ $(46) $  &  $ 3.42$s   \\
        $ 3 $     &  $2.69$s      &  $108$ $(224)$ & $9.96$s &   $4.69$s    & $ 35$ $(63) $  &  $ 2.37$s   \\
        $ 4 $     &  $3.43$s      &  $ 84$ $(301)$ & $9.25$s &   $7.39$s    & $ 27$ $(68) $  &  $ 1.95$s   \\
        $ 5 $     &  $6.14$s      &  $ 70$ $(320)$ & $7.50$s &   $10.8$s    & $ 24$ $(75) $  &  $ 1.82$s   \\
        $ 6 $     &  $5.68$s      &  $ 63$ $(282)$ & $5.21$s &   $14.1$s    & $ 23$ $(84) $  &  $ 1.89$s   \\
%       $ 7 $     &  $6.93$s      &  $ 58$ $(277)$ & $4.67$s &   $16.8$s    & $ 
%22$ $(90) $  &  $ 1.88$s   \\
        $ 8 $     &  $7.71$s      &  $ 54$ $(267)$ & $4.12$s &   $19.5$s    & $ 22$ $(99) $  &  $ 2.02$s   \\
%        $ 9 $     &  $8.74$s      &  $ 51$ $(263)$ & $3.89$s &   $21.8$s    & $ 
%22$ $(109)$  &  $ 2.15$s   \\
        $10 $     &  $10.1$s      &  $ 49$ $(265)$ & $3.62$s &   $24.3$s    & $ 22$ $(116)$  &  $ 2.31$s   \\
%        $11 $     &  $11.3$s      &  $ 50$ $(266)$ & $3.77$s &   $26.6$s    & $ 
%23$ $(119)$  &  $ 2.38$s   \\
\bottomrule
  \end{tabular}}
  \caption{Comparison of DD-$\alpha$AMG and inexact deflation, coarse system solver tolerance $10^{-12}$ and $\nu=5$ in inexact deflation, ill-conditioned system on a $48^4$ lattice (Table~\ref{table:allconfs}: id~\ref{BMW_48_48}), $2,\!592$ cores.}
  \label{table:DDMLvsID1}
\end{table}

Table~\ref{table:DDMLvsID1} compares inexact deflation and DD-$\alpha$AMG for a whole range for $\ninv$. We see that $\ninv=5$ provides the fastest DD-$\alpha$AMG solver which is two times faster than the fastest inexact deflation solver which requires $\ninv=10$. For the calculation of observables where the same system has to be solved for several right-hand-sides, this factor of two directly carries over to the total computation time since the setup cost then is negligible.
When looking at combined times for setup and solve for one right hand side, $\ninv=2$ is best for DD-$\alpha$AMG, where it takes $6.48$s. The best choice for inexact deflation is $\ninv=6$ requiring $10.89$s.

We also see that except for very small values for $\ninv$, the number of iterations required in DD-$\alpha$AMG is less than half of that in inexact deflation. The numbers in parenthesis denote the average number of coarse solver iterations in each iteration of the respective method. For DD-$\alpha$AMG the number of iterations on the coarse grid increases with the work spent in the setup. Hence, the lowest DD-$\alpha$AMG-iteration count does not necessarily provide the fastest solver in the two grid setting. In inexact deflation the number of iterations on the coarse grid is not that clearly tied to $\ninv$. Since in inexact deflation the coarse system must be solved very accurately, the number of iterations needed to solve the coarse grid system is higher than in DD-$\alpha$AMG. It is only moderately (a factor of 2 to 4) higher, though, because the code from \cite{wwwDDHMC} uses an additional adaptively computed preconditioner for the GCR iterations on the coarse system, whereas we use the less efficient odd-even preconditioning in DD-$\alpha$AMG.

For the same number of test vectors, DD-$\alpha$AMG produces a coarse system which is twice as large (and contains four times as many non zeros in the coarse grid operator) as that of inexact deflation  with the benefit of preserving the $\Gamma_5$ structure on the coarse grid. The DD-$\alpha$AMG coarse grid system seems to be more ill-conditioned---an indication that the important aspects of the fine grid system are represented on the coarse grid---and the resulting coarse grid correction clearly lowers the total iteration count more efficiently and thus speeds up the whole method.

An ill-conditioned coarse grid system offers a potential for substantial improvements when passing from two-grid to multigrid. In Figure~\ref{plot:lvl234} we therefore report results of a study with a yet experimental version of our multigrid code. It compares the solver times for odd-even preconditioned BiCGStab with DD-$\alpha$AMG using 2, 3 and 4 levels for different choices of the mass parameter $m_0$. The gain for going from 2 to 3 levels is very noticeable; and for small values of $m_0$, corresponding to the physically interesting regimes, we observe an improvement of a factor of 2.5 to 3. Using more levels is a feature which is not available in the inexact deflation code. 

\begin{figure}[ht]
  \centering\scalebox{0.7}{\input{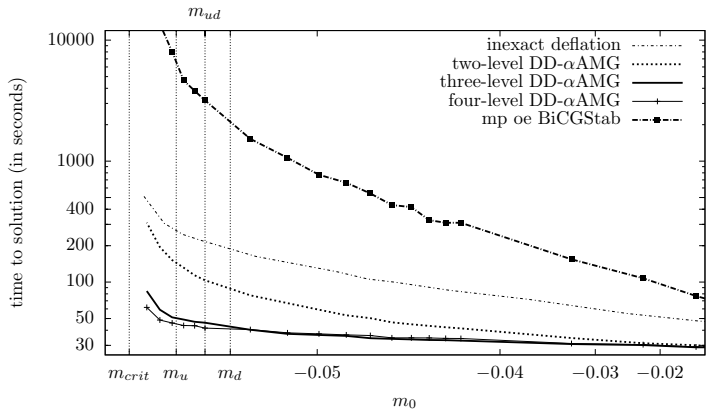}}
  \caption{Mass scaling of 2, 3 and 4 level DD-$\alpha$AMG,
  $64^4$ lattice (Table~\ref{table:allconfs}: id~\ref{BMW_64_64}), restart length $n_{kv}=10$,
  $128$ cores. Here, $m_{ud}$ is the average light quark mass (at and above which most of the current simulations are performed), assuming a common mass for the up and the down quark. Some recent simulations, e.g., \cite{Borsanyi:2013lga,Duncan:1996xy,Finkenrath:2013soa}, already distinguish the mass of the up and the down quark, and this will become more important in the near future. Then the regime close to $m_u$ becomes relevant.}
  \label{plot:lvl234}
\end{figure} 

\begin{table}[ht]
\centering\scalebox{0.9}{\begin{tabular}{lcccc}
\toprule
          & Inexact deflation & DD-$\alpha$AMG & speed-up factor \\
\midrule
    smooth iter  & $5$        & $2$            &           \\
    setup  iter  & $5$        & $3$            &           \\
    setup  time  & $10.9$s    & $7.85$s        & $1.39$    \\
    solve  iter  & $31$       & $45$           &           \\
    solve  time  & $8.63$s    & $5.81$s        & $1.49$    \\
    total  time  & $19.5$s    & $13.6$s        & $1.43$    \\
\bottomrule
  \end{tabular}}
  \caption{Comparison of DD-$\alpha$AMG with inexact deflation on an ill-conditioned system on a $128 \times 64^3$ lattice (Table~\ref{table:allconfs}, id~\ref{CLS_128_64}), same parameters as in Table~\ref{table:DDMLvsID1}, $8,\!192$ cores.}
  \label{table:comp_128_64_3}
\end{table}

Finalizing our discussion, 
%====>
%of the two methods, 
%====<
we compare in Table~\ref{table:comp_128_64_3} inexact deflation and DD-$\alpha$AMG for another configuration typical for many recent lattice QCD computations. Configuration~\ref{CLS_128_64} differs from the other configurations in Table~\ref{table:allconfs} in the way it was generated, resulting in quite different discretization effects. Again we took the default parameter set, but now with relatively cheap setup phases. This results in a gain factor of more than $1.4$ for setup and solve in DD-$\alpha$AMG against inexact deflation. Still 55\% of the execution time is spent in coarse system solves  in DD-$\alpha$AMG.

% -----------------------------------------------------------------------------
\subsection{Comparison with GCR-Smoothing}
\label{ss:gmres_smoother}
% -----------------------------------------------------------------------------
The general applicability of algebraic multigrid ideas to lattice QCD systems was first considered in~\cite{MGClark2010_1, MGClark2007, MGClark2010_2} where the resulting method is simply called ``AMG'', a terminology that we keep for the following discussion. The method has been implemented as part of the QOPQDP library, see \cite{wwwQOPQDP}, which is publicly available. As we explained in sections~\ref{aggregation_based_interpolation_lattice_qcd},~\ref{sec:aAggAMG} and~\ref{sec:Boot} AMG motivated many of the choices in DD-$\alpha$AMG, particularly preservation of $\Gamma_{5}$-symmetry and aggregation based interpolation. 

There are two major differences to the method presented here. One is the choice of the smoothing iteration. While DD-$\alpha$AMG uses some steps of SAP, which can be regarded as a ``block smoothing'', AMG uses ``point smoothing'', i.e., some steps of standard (odd-even preconditioned) GCR. The other important difference is the setup. Different variants are considered in \cite{MGClark2009, MGClark2010_1, MGClark2007, MGClark2010_2}, and the QOPQDP code proceeds by computing the test vectors as quite precise approximations to eigenvectors. They are obtained one at a time by applying a sufficient number of BiCGStab iterations, at the same time keeping the current vector orthogonal to all previous ones. 

Table~\ref{table:comp_amg} reports a comparison of DD-$\alpha$AMG with AMG for two of our configurations. We compared different choices of parameters with our standard parameter settings for DD-$\alpha$AMG. We stopped the iterations when the initial residual was decreases by a factor of $10^{-5}$ (instead of $10^{-10}$), the reason for this being that in QOPQDP configurations are represented in single precision, only.  For the default choice of parameters in AMG we see that the setup is substantially more costly (factors between 2 and 4 in time), while the number of iterations for each system solve is slightly less for DD-$\alpha$AMG. We can make the effort in the AMG setup comparable to that of DD-$\alpha$AMG  by reducing the limit on the maximum number of BiCGStab iterations to be performed on each test vector ({\em msi}), but then the number of iterations for each solve increases in AMG and solve times become always larger than with DD-$\alpha$AMG. The domain decomposition smoother involves less global communication than GCR, which turns out to have a substantial influence on the solve times for a higher number of cores. For example, on $8,\!192$ cores, the solve times are 2 to 3 times smaller than in AMG.

\begin{table}[ht]
\arraycolsep0.1ex
\centering\scalebox{0.9}{\begin{tabular}{lccccccc}
  \toprule
  & \multicolumn{3}{c}{id~\ref{BMW_64_64}, $128$ cores} &\phantom{m}& \multicolumn{3}{c}{id~\ref{CLS_128_64}, $256$ cores} \\ 
                   &  AMG-d  & AMG-20  & DD-$\alpha$AMG &&   AMG-d    &  AMG-10       & DD-$\alpha$AMG                 \\
  \midrule
      setup  time  & $2424$s & $826$s  & $896$s         &&   $2464$s  &  $607$s       & $656$s                         \\
      solve  iter  & $14$    & $22$    & $10$           &&   $13$     &  $21$         & $11$                           \\
      solve  time  & $45.4$s & $66.0$s & $57.1$s        &&   $36.5$s  &  $50.4$s      & $37.3$s                        \\
  \midrule
  & \multicolumn{3}{c}{id~\ref{BMW_64_64}, $8192$ cores} && \multicolumn{3}{c}{id~\ref{CLS_128_64}, $8192$ cores}      \\ 
                   & AMG-d   & AMG-40  & DD-$\alpha$AMG &&   AMG-d    &  AMG-20       & DD-$\alpha$AMG                 \\
  \midrule
      setup  time  & $52.3$s & $24.6$s & $27.7$s        &&   $89.9$s  &  $29.1$s      & $32.3$s                        \\
      solve  iter  & $14$    & $16$    & $10$           &&   $13$     &  $16$         & $11$                           \\
      solve  time  & $4.75$s & $5.51$s & $1.82$s        &&   $3.49$s  &  $3.43$s      & $1.86$s                        \\
\bottomrule
\end{tabular}}
  \caption{Comparison of DD-$\alpha$AMG and AMG. AMG-d uses default parameter settings, AMG-$k$ sets $\textit{msi} = k$ so that setup time
is comparable to DD-$\alpha$AMG. \texttt{SSE} optimization switched off in AMG.}  
  \label{table:comp_amg}
\end{table}

\section*{Acknowledgments}
We thank two anonymous referees for several valuable suggestions. We also thank 
the Budapest-Marseille-Wuppertal collaboration for providing configurations and 
compute time on Juropa. %at J\"ulich Supercomputing Centre (JSC). % where most 
%of the results were computed. 
We would also like to acknowledge James Brannick (Pennsylvania State University) for his advice regarding the development of the multigrid method, %especially the extension of the bootstrap setup to the QCD context. 
%Further thanks go to 
Kalman Szab{\'o} (Ber\-gi\-sche Universit\"at Wuppertal) for his support with %and counsel regarding the 
implementations, Wolfgang S{\"o}ldner (University of Regensburg) for his %time spent discussing our code and 
help with the I/O-interfaces, and Norbert Eicker (Bergische Universit\"at Wuppertal and JSC) for sharing his expertise on Juropa.

\bibliographystyle{siam}
\bibliography{aAMG4QCD}

\end{document}